\definecolor{darkred}  {rgb}{0.5,0,0}
\definecolor{darkblue} {rgb}{0,0,0.5}
\definecolor{darkgreen}{rgb}{0,0.5,0}
\newtheorem{theorem}{Theorem}
\newtheorem{proposition}{Proposition}
\newtheorem{lemma}{Lemma}
\newtheorem{corollary}{Corollary}
\theoremstyle{definition}
\newtheorem{definition}{Definition}
\newtheorem*{remark}{Remark}
\newcommand{\bra}[1]{\langle #1|}
\newcommand{\ket}[1]{|#1\rangle}
\newcommand{\ip}[2]{\langle #1|#2\rangle}
\newcommand{\op}[2]{|#1\rangle \langle #2|}
\DeclareMathOperator{\tr}{Tr}
\newcommand{\mc}[1]{\mathcal{#1}}
\newcommand{\mbf}[1]{\mathbf{#1}}
\title{Quantum Versus Classical Advantages in Secret Key Distillation (and Their Links to Quantum Entanglement) \vspace{-.2cm}}
\author{Eric Chitambar$^1$, \quad Ben Fortescue$^1$, \quad  Min-Hsiu Hsieh$^2$
\\[4mm]
\textit{$^1$ Department of Physics and Astronomy, Southern Illinois University,}\\ 
\textit{Carbondale, Illinois 62901, USA}\\
\textit{$^2$ Centre for Quantum Computation \& Intelligent Systems (QCIS),}\\
\textit{Faculty of Engineering and Information Technology (FEIT),}\\
\textit{University of Technology Sydney (UTS), NSW 2007, Australia}}
\date{\today}
\begin{document}
\maketitle

\vspace{-1cm}

\begin{abstract}

We consider the extraction of shared secret key from correlations that are generated by either a classical or quantum source. In the classical setting, two honest parties (Alice and Bob) use public discussion and local randomness to distill secret key from some distribution $p_{XYZ}$ that is shared with an unwanted eavesdropper (Eve).  In the quantum settings, the correlations $p_{XYZ}$ are delivered to the parties as either an \textit{incoherent} mixture of orthogonal quantum states or as \textit{coherent} superposition of such states; in both cases, Alice and Bob use public discussion and local quantum operations to distill secret key.  While the power of quantum mechanics increases Alice and Bob's ability to generate shared randomness, it also equips Eve with a greater arsenal of eavesdropping attacks.  Therefore, it is not obvious who gains the greatest advantage for distilling secret key when replacing a classical source with a quantum one.  

\smallskip

In this paper we first demonstrate that the classical key rate is equivalent to the quantum key rate when the correlations are generated incoherently in the quantum setting.  For coherent sources, we next show that the rates are incomparable, and in fact, their difference can be arbitrarily large in either direction. However, we identify a large class of non-trivial distributions $p_{XYZ}$ that possess the following properties: (i) Eve's advantage is always greater in the quantum source than in its classical counterpart, and (ii) for the quantum entanglement shared between Alice and Bob in the coherent source, the so-called entanglement cost/squashed entanglement/relative entropy of entanglement can all be computed.  With property (ii), we thus present a rare instance in which the various entropic entanglement measures of a quantum state can be explicitly calculated. 

\end{abstract}

\vspace{-.375cm}

\section{Introduction}

When possessing a shared secret key, two parties can communicate over a public communication channel in a provably secure manner.  Specifically, by using the key for a one-time pad encryption, the public message can faithfully encode the secret message, thereby protecting it from any eavesdropping third party.  While it is impossible to establish secret key using public communication alone, it turns out that public communication can often be used to transform partially secret key into a stronger, more usable form, a process known as \textit{secret key agreement} or \textit{secret key distillation} \cite{Maurer-1993a, Ahlswede-1993a}.  More precisely, suppose that the two parties (called Alice and Bob) share correlated random variables $X$ and $Y$, but an eavesdropper (called Eve) has partial knowledge contained in her variable $Z$.  Using public communication and local processing of their variables, Alice and Bob may be able to generate a highly correlated pair of variables $(\hat{X},\hat{Y})$ that is strongly uncorrelated with both $Z$ and the conducted public communication, i.e. secret key. 

Typically one considers the scenario where Alice, Bob, and Eve have access to some source that generates identical and independent copies of $XYZ$.  The figure of merit then becomes the \textit{secret key rate}, which is the largest number of secret bits per copy that Alice and Bob can obtain using local processing and public communication.  Clearly if $Z$ is highly correlated with either $X$ or $Y$, then it will be impossible for key to be generated, even in the many-copy setting.  But it is a significant open problem to understand precisely how Eve's side information $Z$ affects the key rate.

There are a variety of physical situations in which one might encounter a many-copy source of variables $XYZ$.  Most notably is the task of quantum key distribution (QKD) in which the variables $XYZ$ are generated through the inherently stochastic nature of quantum measurement \cite{Bennett-1984a}.  Alice, Bob, and Eve share a tripartite quantum state of the form $\ket{\Psi}^{ABE}=\sum_{x,y,z}e^{i\varphi_{xyz}}\sqrt{p(x,y,z)}\ket{xyz}^{ABE}$, where $\varphi_{xyz}$ are arbitrary phases and $p(x,y,z)$ describes a joint distribution for variables $XYZ$.  When the three parties measure their quantum system in the computational basis (i.e. in the $\{\ket{x}^A\}$, $\{\ket{y}^B\}$, and $\{\ket{z}^E\}$ basis respectively), their measurement outcomes are distributed according to $p(x,y,z)$.  If this is done on multiple copies of $\ket{\Psi}^{ABE}$, the parties thus generate a many-copy source of $XYZ$ from which Alice and Bob can distill secret key using public discussion and local processing.

Note that the described scenario only describes one particular way that Alice and Bob could use multiple copies of $\ket{\Psi}^{ABE}$ to obtain key.  With quantum mechanics, more physical operations are allowed than just measuring in the computational basis.  They could, for instance, put their local subsystems through some quantum channel (i.e. a trace-preserving, completely positive map), or they could each entangle their local subsystems and subject them to some joint unitary evolution before measuring.  Naively then, it appears that with such greater operational powers, Alice and Bob can always distill at least as much key from a quantum source of $\ket{\Psi}^{ABE}$ than from a classical source of the underlying random variables $XYZ$.  However, in the quantum scenario, Eve also gains operational strength in her eavesdropping abilities.  This begs the natural question: for the purpose of secret key distillation, who gains the greatest advantage when embedding a given distribution $p(x,y,z)$ into a multi-party quantum system, the honest parties or the adversary?  

Answering this question is the central aim of this paper.  Through the construction of specific examples, we show that the advantage can lie either with Alice/Bob or with the Eve.  Hence the adage ``quantum is more powerful than classical'' is really a matter of perspective when it comes to the task of secret key distillation.  Furthermore, we prove that \textit{quantum coherence} plays the essential role in affecting whether the quantum key rate differs from its classical counterpart.  More precisely, in the state $\ket{\Psi}^{ABE}$ given above, the distribution $p(x,y,z)$ is encoded as a coherent superposition of the basis states $\ket{xyz}^{ABE}$.  An alternative form of quantum embedding is an \textit{incoherent} mixture of states $\rho^{ABE}=\sum_{x,y,z}p(x,y,z)\op{xyz}{xyz}$.  We prove that even when Alice and Bob are allowed to perform arbitrary quantum processing on their respective parts of $\rho^{ABE}$, their optimal rate of key extraction is not improved over the corresponding classical key rate.  This result is significant as it identifies quantum coherence as precisely the ingredient that distinguishes classical from quantum secret key distillation, something that has not previously been understood.  

\begin{table}[bp] 
\begin{tabular*}{\hsize}{@{\extracolsep{\fill}}cc}
\hline
Rates & Results Presented Here\\ \hline
$K_D(p_{XYZ})$ vs. $K_D(\rho_{ccc})$ & $K_D(p_{XYZ})=K_D(\rho_{ccc})$\\
\hline
\multirow{2}{*}{} $K_D(p_{XYZ})$ vs. $K_D(\Psi_{qqq})$ & no general relationship \\&with arbitrarily large gaps existing in both directions\\ \hline
\multirow{3}{*}{}$K_D(\rho_{ccq})$ vs.&\\$K_D(\rho_{cqq})$ vs. &$K_D(\rho_{ccq})\leq K_D(\rho_{cqq})\leq K_D(\rho_{qqq})$\\$K_D(\rho_{qqq})$  & with arbitrarily large gaps existing between the rates\\\hline
\end{tabular*}\vspace{3mm}
\caption{Quantum Versus Classical Advantages in Secret Key Distillation.}
\label{Table:Summary}
\end{table}

Our second main result involves computing the entanglement of a quantum state based on the properties of its embedded classical distribution.  Evaluating some of the most important entanglement measures for a general quantum state is a notoriously difficult problem due to the variational character of these measures.  However, as we will demonstrate in this paper, when embedding quantum states with certain types of probability distributions, the entanglement can be bounded by the secret key rate of the underlying distribution; and in some cases the two are equivalent.  This offers a remarkable demonstration of how cryptographic results in classical information can be used to uncover novel physical properties of quantum systems.

The structure of this paper is as follows. In Sec.~\ref{Sec_Results}, we briefly summarize the main results of the paper.  We then move to Sec.~\ref{Sec_2} where a relatively self-contained overview of the necessary concepts are present.  In particular, we describe a unified framework for local information processing and public communication in both classical and quantum key distillation protocols.  Secret key is presented as a classical analog to quantum entanglement, and the tasks of secret key distillation/formation are described as the counterparts to entanglement distillation/formation.  While still in Sec.~\ref{Sec_2}, we define classes of probability distributions that possess special properties such as the ability to dilute and compress secret correlations at equal rates.  Main results and their proofs are collected in Sec.~\ref{sec_Main}. 

\section{Summary of Results}
\label{Sec_Results}

\begin{table*}[bp] 
\begin{tabular*}{\hsize}{@{\extracolsep{\fill}}lcc|}
\hline
{Type of Distribution} & Results Presented Here \\ \hline
  \multirow{2}{*}{Secrecy Reversible} & \multirow{1}{*}{$K_D(p_{XYZ})\geq E_{sq}(\rho^{AB})$}\\ 
  &  \\ \hline
  Secrecy Reversible & \multirow{ 1}{*}{$K_D(p_{XYZ})\geq E_F(\rho^{AB})$}\\
 + UBI-PD &  \\ \hline
Secrecy Reversible &  \multirow{1}{*}{$K_C(p_{XYZ})=E_{sq}(\rho^{AB})=K_D(\Psi^{ABE})=K_D(p^{XYZ})$} \\
 + Semi-unambiguous & \\ \hline
Secrecy Reversible &\multirow{1}{*}{$K_C(p_{XYZ})=E_{F}(\rho^{AB})=E_C(\rho^{AB})=E_{sq}(\rho^{AB})=E_{r}(\rho^{AB})$}  \\
+UBI-PD & \multirow{1}{*}{$=E_D(\rho^{AB})=K_D(\Psi^{ABE})=K_D(p_{XYZ})=H(J_{XY|Z}|Z)$}\\
 + Semi-Unambiguous &    \\ \hline
\end{tabular*}
\vspace{3mm}
\caption{Linking secret keys to quantum entanglement.}
\label{Table:Summary2}
\end{table*}

We now summarize our results in two tables.  The first describes classical versus quantum key rates $K_D$.  The underlying classical distribution is $p_{XYZ}$ and $\ket{\Psi_{qqq}}$ is its coherent embedding in a tripartite state.  When one or more of the parties dephases in the computational basis, it generates a mixed state, and in this state we make the notational change $q\to c$ for the corresponding parties who dephase.  Thus $\rho_{ccc}$ is a full incoherent mixture according to the classical distribution $p_{XYZ}$.  For completion we note a result given by Christandl \textit{et. al} who showed a distribution $p_{XYZ}$ for which Eve being quantum gives her a definite increase in power $K_D(\rho_{ccc})>K_D(\rho_{ccq})$  \cite{Christandl-2007a}.

Our second series of results involves analyzing the quantum versus classical rates for special classes of distributions.  Details of these classes are presented in Sect \ref{Sec_2}, and here we just state the results.  Most notably are the distributions possessing so-called secrecy reversibility, which can be shown to have the following property related to quantum versus classical advantages.
\begin{theorem}
\label{thm:ReversibilityAdvantages}
If a distribution $p_{XYZ}$ possess secrecy reversibility, then Alice and Bob can gain no advantage over Eve when embedding their correlations into a quantum source (i.e. $K_D(p_{XYZ})\geq K_D(\Psi_{qqq})$).
\end{theorem}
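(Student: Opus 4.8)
The plan is to squeeze $K_D(\Psi_{qqq})$ between the classical rate and the squashed entanglement of the Alice--Bob marginal $\rho^{AB} := \tr_E \op{\Psi_{qqq}}{\Psi_{qqq}}$; that is, I would establish the two inequalities
\[
K_D(p_{XYZ}) \;\geq\; E_{sq}(\rho^{AB}) \;\geq\; K_D(\Psi_{qqq}),
\]
and then compose them. The second inequality uses no secrecy reversibility at all: $\ket{\Psi_{qqq}}$ is pure, so Eve's register $E$ is a purification of $\rho^{AB}$, and against an Eve holding the full purification the distillable key is bounded by the squashed entanglement, $K_D(\Psi_{qqq}) = K_D(\rho^{AB}) \leq E_{sq}(\rho^{AB})$ (the Christandl--Winter squashed-entanglement bound on distillable key; see, e.g., \cite{Christandl-2007a}). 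So the entire weight of the theorem sits in the first inequality $K_D(p_{XYZ}) \geq E_{sq}(\rho^{AB})$ --- which is exactly the bound already recorded for secrecy-reversible distributions in Table~\ref{Table:Summary2}. Taking that as input, the proof is finished at this point; the real work is in the input bound itself, which I would attack as follows.

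Secrecy reversibility means, by definition, that $p_{XYZ}$ can be diluted and distilled at equal rates, i.e.\ $K_D(p_{XYZ}) = K_C(p_{XYZ})$ with $K_C$ the secret-key cost, so it suffices to show $K_C(p_{XYZ}) \geq E_{sq}(\rho^{AB})$; in fact I would aim for the stronger $K_C(p_{XYZ}) \geq E_C(\rho^{AB})$, which implies the former since $E_{sq}$ is additive and $\leq E_F$, hence $E_{sq} \leq E_C$. For this, I would take an optimal key-dilution protocol turning $\approx m K_C(p_{XYZ})$ ideal secret bits, plus local operations and public communication, into a state $\varepsilon$-close to $p_{XYZ}^{\otimes m}$, and run it \emph{coherently}: supply each secret bit as a maximally entangled pair (a pure state, so it furnishes Alice and Bob a perfect secret bit under a local measurement while leaving Eve uncorrelated), keep every would-be-classical register quantum, and dilate each local channel and measurement to an isometry. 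The output is a (generally entangled) state on $A^m B^m$ that dephases in the computational bases to $p_{XY}^{\otimes m}$; with the appropriate local phase corrections by Alice and Bob it should be $\varepsilon'$-close to $\rho^{AB\otimes m}$, and then asymptotic continuity of the entanglement of formation together with a blocking argument gives $E_C(\rho^{AB}) \leq K_C(p_{XYZ})$.

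The hard part will be the faithfulness of this coherent simulation, and I expect that is where the structure of the distribution really has to be used. Two issues stand out. First, the public messages must stay classical --- Eve receives a copy of each broadcast --- so one cannot simply leave those registers coherent; the entanglement of $\rho^{AB}$ must be produced in spite of this, and one has to check that giving Eve a coherent copy of every broadcast while Alice and Bob retain their own copies does not wash out exactly the coherences that make $\rho^{AB}$ entangled. Second, the phases $\varphi_{xyz}$ of $\ket{\Psi_{qqq}}$ are arbitrary whereas the coherified protocol produces one definite phase pattern, so one must verify that the mismatch is a product of local unitaries on $A$, $B$, and $E$ (the $E$-local part being invisible to $\rho^{AB}$), or else restrict to the subclass of distributions for which this is automatic --- plausibly precisely the role of the ``semi-unambiguous''/UBI-PD refinements appearing in Table~\ref{Table:Summary2}. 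If these two points can be pushed through for every secrecy-reversible distribution, the chain closes and Theorem~\ref{thm:ReversibilityAdvantages} follows.
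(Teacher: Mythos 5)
Your skeleton coincides with the paper's: sandwich the quantum rate as $K_D(\Psi_{qqq})\leq E_{sq}(\rho^{AB})\leq K_D(p_{XYZ})$, where the first inequality is the known squashed-entanglement bound on distillable key (Theorem~\ref{prop_HHC}) and requires no reversibility. All of the content therefore sits in the second inequality, and there your argument has a genuine gap. You cannot take $K_D(p_{XYZ})\geq E_{sq}(\rho^{AB})$ ``as input'' from Table~\ref{Table:Summary2}: that table summarizes the paper's own results, and this bound \emph{is} Theorem~\ref{Thm:Main-Quant}(a), i.e.\ the thing to be proven. Your fallback---coherify an optimal key-dilution protocol to show $K_C(p_{XYZ})\geq E_C(\rho^{AB})$---is left unfinished at exactly the two points you flag, and those are not mere technicalities. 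A classical dilution protocol targets the classical distribution $p_{XYZ}$; its broadcasts are necessarily decohered (Eve retains a copy), so in the coherified version every off-diagonal element of $\rho^{AB}$ must be traced back to consumed EPR pairs surviving all measurements and broadcasts, and nothing in the bare definition $K_C=K_D$ guarantees this. Moreover the intermediate claim $K_C\geq E_C$ is stronger than needed and stronger than what the paper itself establishes for general reversible distributions (it obtains $E_F\leq K_D$ only under the additional UBI-PD hypothesis, Theorem~\ref{Thm:Main-Quant}(b)).

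The paper closes the gap by a different, non-protocol-based route that you never invoke: the structure theorem for reversibility (Theorem~\ref{Thm:Reversible-Structure}(1)) says $K_C=K_D$ forces the existence of a channel $\overline{Z}|Z$ making $p_{XY\overline{Z}}$ block independent. Eve dephasing $\ket{\Psi_{qqq}}$ and applying this channel yields an explicit extension $\sigma^{AB\overline{Z}}=\sum_{\overline{z}}p(\overline{z})\,\sigma^{AB}_{(\overline{z})}\otimes\op{\overline{z}}{\overline{z}}$ of $\rho^{AB}$, whence $E_{sq}(\rho^{AB})\leq\tfrac{1}{2}\sum_{\overline{z}}p(\overline{z})I(A:B)_{\sigma^{AB}_{(\overline{z})}}$; block independence makes the local block-number maps deterministic coarse-grainings with output entropy $H(J_{XY|\overline{Z}}|\overline{Z}=\overline{z})$, so the right-hand side is bounded by $H(J_{XY|\overline{Z}}|\overline{Z})=K_D(p_{XYZ})$. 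Any repair of your route would still need this structural fact to control the blocks and the arbitrary phases $\varphi_{xyz}$, at which point the coherified-protocol machinery becomes superfluous.
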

It turns out that for a special subclass of reversible distributions, which also belong to the family of so-called semi-unambiguous distributions, we are able to compute the quantum entanglement in the embedded state.  Specifically, the squashed entanglement ($E_{sq}$), the relative entropy of entanglement ($E_{r}$), and the entanglement of formation ($E_F$) can all be computed.  This is remarkable since all previous calculations of these measures for a given mixed state involve exploring certain symmetries of the state (such as the so-called quantum flower states \cite{1499048}).  Our results are obtained purely by relating entanglement to the classical problem of secrecy reversibility.  Our results are summarized in Table \ref{Table:Summary2}, where the quantity $J_{XY|Z}$ is the conditional common information .

\section{Preliminary}
 \label{Sec_2}
 
\subsection{A Unified Framework for Local Information Processing and Public Communication}

In this section we review the definitions of classical and quantum local operations and public communication (LOPC).

\subsubsection{Classical Operations.}

In the classical LOPC setting, each party is allowed to perform the following operations:
\begin{enumerate}
\item[(i)] Generate local random variables that are uncorrelated from the variables held by any other party.
\item[(ii)] Generate copies of any locally held variables.
\item[(iii)] Change the values of any locally held variables according to some function.
\item[(iv)] Broadcast the result of any computed function over an authenticated public channel.
\end{enumerate}
Note that operations (i) - (iii) encompass any sort of noisy processing that a party may wish to perform.  A general \textbf{classical LOPC protocol} $\mc{P}_c$ then consists of two phases: Phase I - a coordinated and multi-round exchange of public messages in which each message is a function of some party's local variables, and Phase II - each party processes his/her variables through a local channel chosen according to the messages of Phase I, thereby generating the output variables of the protocol.  It is not difficult to see that the variables generated by any sequence of operations (i)--(iv) can always also be generated by a protocol following this two-phase format \cite{Ahlswede-1993a, Maurer-1993a, Csiszar-2011a}.

Consider now an arbitrary random variable $G$ that is distributed according to $p_G$ over alphabet $\mc{G}$.  Formally, we will represent $G$ as a quantum state:
\begin{equation}
\label{Eq:RVstate}
\omega_G=\sum_{g\in\mc{G}}p(g)\op{g}{g},
\end{equation}
where the $\ket{g}$ are orthonormal vectors for a vector space of dimension $|\mc{G}|$.  When $\omega_G$ is subjected to classical operations (i)--(iv), it will transform as
\begin{align} \label{Eq:LOPCsteps}
\omega_G&\;\overset{(i)}{\longrightarrow}\;\omega_G\otimes \omega^{\text{local}}_S\qquad\text{for local ancillary variable $S$};\notag\\
\omega_G&\;\overset{(ii)}{\longrightarrow}\;\sum_{g\in\mc{G}}p(g)\op{g}{g}\otimes \op{g}{g}^{\text{local}};\notag\\
\omega_G&\;\overset{(iii)}{\longrightarrow}\;\sum_{g\in\mc{G}}p(g)\op{f(g)}{f(g)}^{\text{local}}\notag\\
\omega_G&\;\overset{(iv)}{\longrightarrow}\;\sum_{g\in\mc{G}}p(g)\op{g}{g}\otimes\op{f(g)}{f(g)}^{\text{global}}.
\end{align}
Here, ''local'' (resp. ``global'') refers to some system storing classical information which only one party (resp. all parties) can access.

\subsubsection{Quantum Operations.}

In the quantum LOPC setting, each party is allowed to perform the following operations:
\begin{enumerate}
\item[(i)]  Perform a local quantum instrument $(\mc{E}_m)_m$ \cite{Davies-1970a}, where each $\mc{E}_m$ is a completely positive (CP) map, and their sum $\sum_m \mc{E}_m$ is a trace-preserving map.  Quantum instruments represent the most general type of quantum measurement.  When performing the instrument on the state $\sigma$, the ``measurement'' outcome $m$ is obtained with probability $p(m)=tr[\mc{E}_m(\sigma)]$, and the post-measurement state given this outcome is $\sigma_m=\mc{E}_m(\sigma)/p(m)$.
\item[(ii)] Broadcast the result of any quantum measurement.
\end{enumerate}
A general \textbf{quantum LOPC protocol} $\mc{P}_q$ is described by a multi-level ``tree'' of local instruments in which the choice of instrument performed at each node of the tree depends on the particular history of measurement outcomes leading up to that node (see Ref. \cite{Kleinmann-2011a, Chitambar-2013c, Chitambar-2014b} for details).   Transformations (i)--(iii) in Eq. \eqref{Eq:LOPCsteps} fall within the framework of local quantum instruments since evaluating a function is a special type of quantum measurement in which the measurement outcome is the function's value.  Therefore, quantum LOPC generalizes the notion of classical LOPC.

With a slight abuse of notation, for a given classical/quantum protocol we use $\mc{P}_{c}/\mc{P}_q$ to denote both the particular protocol as well as the map associated with the protocol:
\[\mc{P}(\sigma^{ABE})=\sum_mp(m)\sigma^{ABE}_m\otimes \op{m}{m}^{\text{global}},\]
where $\sigma^{ABE}_m$ is the tripartite state generated when $m$ is the total broadcasted meassage.  The random variable associated with the global communication is \[\omega_M^{\text{global}}=\sum_mp(m)\op{m}{m}^{\text{global}}.\]

\subsection{A Classical versus quantum source of correlations}

Throughout this paper, we will assume that some basis for Alice, Bob, and Eve's system has chosen and is fixed.  Each of these is typically referred to as the \textit{computational basis} for the given system and is denoted by $\{\ket{x}^A\}_{x=1}^{d_A}$, $\{\ket{y}^B\}_{y=1}^{d_B}$ and $\{\ket{z}^E\}_{z=1}^{d_E}$ respectively.  Let $p_{XYZ}$ be an arbitrary three-way joint probability distribution for random variables $X$, $Y$, and $Z$ which takes on values $p(x,y,z)$.  We introduce the following physical instantiations of $p_{XYZ}$:
\begin{itemize}
\item A coherent embedding (or qqq embeding):
\begin{align}
\label{Eq:qqq}
\ket{\Psi_{qqq}}=&\sum_{x,y,z}e^{i\varphi_{xyz}}\sqrt{p(x,y,z)}\ket{xyz}^{ABE},\\
&\text{for any $\varphi_{xyz}\in[0,2\pi)$}.
\end{align}
\item A one-sided incoherent embedding (or cqq embedding):
\begin{align}
\rho_{cqq}&=\sum_{x}p(x)\op{x}{x}^A\otimes\op{\psi_{x}}{\psi_x}^{BE},\\ &\text{where $\ket{\psi_x}=\sum_{y,z}e^{i\varphi_{xyz}}\sqrt{p(y,z|x)}\ket{yz}$.}\notag
\end{align}
\item A two-sided incoherent embedding (or ccq embedding):
\begin{align}
\rho_{ccq}&=\sum_{x,y}p(x,y)\op{xy}{xy}^{AB}\otimes\op{\psi_{xy}}{\psi_{xy}}^{E},\\&\text{where $\ket{\psi_{xy}}=\sum_{z}e^{i\varphi_{xyz}}\sqrt{p(z|xy)}\ket{z}$.}\notag
\end{align}
\item An incoherent embedding (or ccc embedding):
\begin{equation}
\label{Eq:ccc}
\qquad\rho_{ccc}=\sum_{x,y,z}p(x,y,z)\op{xyz}{xyz}^{ABE}.
\end{equation}
\end{itemize}
Note that $\rho_{ccc}$ corresponds to the state $\omega_{XYZ}$ introduced in Eq. \eqref{Eq:RVstate}.  We can therefore think of $\rho_{ccc}$ as either a classical or quantum object, the difference being dictated by whether it is processed using either classical or quantum LOPC.

Regardless of how the phases $\phi_{xyz}$ are chosen, the various embeddings can be related through a series of local physical transformations:
\begin{align}
\label{Eq:dephase}
\op{\psi_{qqq}}{\psi_{qqq}}\overset{(1)}{\longrightarrow}\rho_{cqq}\overset{(2)}{\longrightarrow}\rho_{ccq}\overset{(3)}{\longrightarrow}\rho_{ccc},
\end{align}
where (1) is attained by Alice performing a dephasing channel
\[\sigma\to\sum_{x}\op{x}{x}\sigma\op{x}{x}\] and likewise for (2) and (3).  It is also worth pointing out that since we allow for phases $\phi_{xyz}$ in the coherent superposition of Eq. \eqref{Eq:qqq}, \textit{any tripartite pure state can be regarded as the qqq state of some distribution}.  This distribution $p_{XYZ}$ is obtained simply by all three parties dephasing in the computational basis as shown in Eq. \eqref{Eq:dephase}.  Our approach is thus more general to previous studies in coherent embeddings \cite{Christandl-2007a}, where it is typically assumed that $\phi_{xyz}=0$ for all $x$, $y$, and $z$.

\subsubsection{Secret Key Distillation.}

The scenario we consider is an identical, independent, and discrete (i.i.d.) source that is generating some particular embedding of $p_{XYZ}$ for Alice, Bob and Eve.  The goal of Alice and Bob is to distill secret key, which is shared randomness held independently of Eve's system.  We denote the state corresponding to $r:=\log s$ bits of perfectly shared randomness by
\begin{equation}
\Phi_{r}^{AB}=\frac{1}{s}\sum_{i=1}^s\op{ss}{ss}^{AB}.
\end{equation}
The notion of secret key rate is defined as follows.
\begin{definition}
For distribution $p_{XYZ}$, we say that $R$ is an \textbf{LOPC achievable key rate} if for every $\epsilon>0$, there exists a classical LOPC protocol $\mc{P}_c$ acting on $\sigma^{ABE}:=\rho_{ccc}^{\otimes n}$ (for $n$ sufficiently large) and generating messages $\omega_M^{\text{global}}$ such that 
\begin{equation}
\label{Eq:KeyrateDefn}
\left\|\mc{P}_c(\sigma^{AB})-\Phi_{\lfloor nR-\epsilon\rfloor}^{AB}\otimes \left(\omega_M^{\text{global}}\otimes \sigma^{E}\right)\right\|_1<\epsilon,
\end{equation}
where $\sigma^{AB}=tr_E\sigma^{ABE}$ and $\sigma^E=tr_{AB}\sigma^{ABE}$.  The supremum achievable key rate is denoted by $K_D(p_{XYZ})$.  We say that $R$ is a \textbf{ccc}, \textbf{ccq}, \textbf{cqq}, or \textbf{qqq LOPC achievable rate} if there exists a quantum LOPC protocol $\mc{P}_q$ to replace $\mc{P}_c$ in Eq. \eqref{Eq:KeyrateDefn}, and we further take $\sigma^{ABE}:=\rho_{ccc}^{\otimes n}$, $\sigma^{ABE}:=\rho_{ccq}^{\otimes n}$, $\sigma^{ABE}:=\rho_{cqq}^{\otimes n}$ or $\sigma^{ABE}:=\op{\psi_{qqq}}{\psi_{qqq}}^{\otimes n}$ respectively.  The supremum achievable key rates in these scenarios are denoted by $K_D(\rho_{ccc})$, $K_D(\rho_{ccq})$, $K_D(\rho_{cqq})$ and $K_D(\psi_{qqq})$ respectively.
\end{definition}


\subsection{Quantum Entanglement}

Quantum entanglement is a resource shared between two or more quantum systems that is distinct from secret key \cite{Horodecki-2009a}.  Starting from a tripartite pure state $\ket{\Psi_{qqq}}$, Alice and Bob share one entangled bit (ebit) of quantum information in the state $\ket{\Psi_{qqq}}$ if it has the form 
\begin{equation}
\label{Eq:Ebit}
\ket{\Psi_{qqq}}^{ABE}=\ket{\Phi_2}^{AB}\otimes \ket{\varphi}^E, 
\end{equation}
where $\ket{\Phi_2}^{AB}:=\sqrt{1/2}(\ket{00}+\ket{11})^{AB}$ is a so-called ebit and $\ket{\varphi}^E=\sum_ze^{i\varphi_z}\times\\\sqrt{p(z)}\ket{z}$ is any state held by Eve.  On the surface, the tripartite state $\ket{\Psi_{qqq}}^{ABE}=\ket{\Phi_2}^{AB}\otimes\ket{\varphi}^E$ looks very similar to the state $\rho^{ABE}_{ccq}=\Phi_2^{AB}\otimes \op{\varphi}{\varphi}^E$, which contains one bit distillable secret key and is obtained from $\ket{\Psi_{qqq}}^{ABE}$ through dephasing by Alice and Bob.  However there is a critical difference between the two states.  For $\Phi_2^{AB}\otimes \op{\varphi}{\varphi}^E$, it is entirely consistent that there should exist some third party Sapna ($S$) who holds as side information the value of Alice and Bob's bit in $\Phi_2^{AB}$.  In other words, we can envision a four-party state $\sigma^{ABES}=\Phi_3^{ABS}\otimes\op{\varphi}{\varphi}^E$ with $\Phi_3^{ABS}=\frac{1}{2}(\op{000}{000}+\op{111}{111})$.  And while $K_D(\sigma^{ABE})=1$, there is no secrecy with respect to Sapna: $K_D(\sigma^{ABS})=0$.  In contrast, Alice and Bob's entanglement in $\ket{\Psi_{ABE}}$ exists regardless of what side information is known.  That is, if $tr_{ES}(\sigma^{ABES})=\op{\Phi_2}{\Phi_2}^{AB}$ for \textit{any} state $\sigma^{ABES}$, then necessarily $\sigma^{ABES}$ has the product-state form $\sigma^{ABES}=\op{\Phi_2}{\Phi_2}^{AB}\otimes \sigma^{ES}$.  This means that if Alice and Bob should dephase when holding the state $\sigma^{ABES}$, they will generate key that is secret from not only Eve but also Sapna: $\op{\Phi_2}{\Phi_2}^{AB}\otimes \sigma^{ES}\to \Phi^{AB}_2\otimes \sigma^{ES}$.  Therefore, entanglement is a property of a bipartite $\rho^{AB}$ itself and, unlike secret key, one does not need to introduce any third party to speak of its entanglement.

Similar to the secret key rate $K_D$, one can also define for $\rho^{AB}$ the entanglement distillation rate $E_D$ \cite{Rains-1999a}.  This quantifies the asymptotic rate for which ebits can be obtained from $\rho^{AB}$ using local operations and classical communication (LOCC).  The operational class LOCC differs from quantum LOPC in that the former makes no explicit reference to a third party who records the ``public'' communication.  It is a fundamental and challenging problem in quantum information to compute $E_D(\rho^{AB})$ for a given quantum state.  Almost all meaningful measures of entanglement provide an upper bound for $E_D$ \cite{Horodecki-2009b}, and three such measures are the relative entropy of entanglement \cite{Vedral-1998a}, the squashed entanglement \cite{Christandl-2004a}, and the entanglement of formation \cite{Bennett-1996a}: 
\begin{itemize}
\item $E_r(\rho^{AB})$: the relative entropy of entanglement is  
\begin{equation}
E_r(\rho^{AB})=\min_{\sigma\in \mc{S}}S(\rho\||\sigma),
\end{equation}
where $\mc{S}$ is the set of separable density operators and $S(\rho||\sigma)=-tr[\rho\log\sigma]-S(\rho)$ is the relative entropy;
\item $E_{sq}(\rho^{AB})$: the squashed entanglement is 
\begin{equation}
E_{sq}(\rho^{AB})=\frac{1}{2}\inf_{\rho^{ABE}} I(A:B|E)_{\rho^{ABE}},
\end{equation}
where the infimum is taken over all extensions $\rho^{ABE}$ such that $\tr_E\rho^{ABE}=\rho^{AB}$, and $I(A:B|E)_{\rho^{ABE}}=S(AE)+S(BE)-S(ABE)-S(E)$ is the conditional quantum mutual information of the state $\rho^{ABE}$.  
\item $E_F(\rho^{AB})$: the entanglement of formation is 
\begin{equation}
\label{Eq:EOF}
E_F(\rho^{AB})=\min \sum_{i}p(i)S(\tr_A \varphi_i),
\end{equation}
with the minimization taken over all decompositions $\rho^{AB}=\sum_ip(i)\op{\varphi_i}{\varphi_i}$.
\end{itemize}
The particular significance of these entanglement measures is that they provide upper bounds not only for the distillable entanglement but also for distillable key.
\begin{theorem}
[\cite{Horodecki-2005d, Horodecki-2009b, Christandl-2006a}]\label{prop_HHC}
For an arbitrary tripartite state $\ket{\Psi_{qqq}}^{ABE}$ with \\$\rho^{AB}=tr_E\ket{\Psi_{qqq}}\bra{\Psi_{qqq}}^{ABE}$, the rates $K_D(\Psi_{qqq})$ and $E_D(\rho^{AB})$ are both upper bounded by the relative entropy of entanglement $E_r(\rho^{AB})$ as well as the squashed entanglement $E_{sq}(\rho^{AB})$.
\end{theorem}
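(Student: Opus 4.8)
The plan is to reduce everything to bounds on $K_D(\Psi_{qqq})$ and then invoke the standard private-state machinery. First, observe that $E_D(\rho^{AB}) \le K_D(\Psi_{qqq})$: running an optimal LOCC entanglement-distillation protocol on $\rho_{AB}^{\otimes n}$ produces (for large $n$) close to $nE_D(\rho^{AB})$ ebits, and because $\ket{\Psi_{qqq}}^{ABE}$ is a pure state, any ebit appearing as a tensor factor of the $AB$-marginal is automatically decoupled from Eve's $E^{\otimes n}$; dephasing those ebits in the computational basis turns them into that many bits of secret key. So it suffices to show $K_D(\Psi_{qqq}) \le E_{sq}(\rho^{AB})$ and $K_D(\Psi_{qqq}) \le E_r(\rho^{AB})$. (Equivalently, the two bounds on $E_D$ can be obtained directly, since $E_{sq}$ and $E_r^\infty$ are LOCC monotones normalized on maximally entangled states --- this is exactly how the cited references treat $E_D$.)

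Second, I would analyze a key-distillation protocol from Alice and Bob's side. All communication is public and classical, so an LOPC protocol run on $\ket{\Psi_{qqq}}^{\otimes n}$ restricts, on systems $AB$, to an LOCC protocol applied to $\rho_{AB}^{\otimes n}$; Eve merely holds $E^{\otimes n}$ together with transcripts of the messages, and since $E_{sq}$ and $E_r$ are functions of the $AB$-marginal alone, those extra transcript copies are irrelevant to the bound. By the structure theorem for distillable key (Horodecki--Horodecki--Horodecki--Oppenheim), if the protocol outputs $\lfloor nR \rfloor$ bits of key that are $\epsilon$-secure against Eve, then the bipartite state $\gamma$ held by Alice and Bob just before the final key readout --- the key registers together with their ``shield'' systems --- is $O(\epsilon)$-close in trace norm to a private (``twisted'') state carrying $\lfloor nR \rfloor$ private bits. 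Private states, and states within trace distance $\epsilon$ of one, satisfy $E_{sq}(\gamma) \ge nR - n\delta_n$ and $E_r^\infty(\gamma) \ge nR - n\delta_n$ with $\delta_n \to 0$, by asymptotic continuity of $E_{sq}$ and of the relative entropy of entanglement (Alicki--Fannes / Fannes--Audenaert type estimates).

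Third, I would close the loop using monotonicity, additivity, and regularization. Since $E_{sq}$ is an LOCC monotone and additive, $nR - n\delta_n \le E_{sq}(\gamma) \le E_{sq}(\rho_{AB}^{\otimes n}) = n\,E_{sq}(\rho^{AB})$; dividing by $n$ and letting $\epsilon \to 0$ gives $R \le E_{sq}(\rho^{AB})$, hence $K_D(\Psi_{qqq}) \le E_{sq}(\rho^{AB})$. For the relative entropy of entanglement, $E_r^\infty$ is likewise an LOCC monotone with $E_r^\infty(\rho_{AB}^{\otimes n}) = n\,E_r^\infty(\rho^{AB}) \le n\,E_r(\rho^{AB})$, so the same argument yields $K_D(\Psi_{qqq}) \le E_r(\rho^{AB})$. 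Combined with $E_D(\rho^{AB}) \le K_D(\Psi_{qqq})$, all four inequalities follow.

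The main obstacle is the middle step: proving that $\epsilon$-secure key forces the pre-readout state to be $O(\epsilon)$-close to a private state, and proving the matching lower bounds $E_{sq}, E_r^\infty \ge \log K$ for (approximate) private states, with error terms controlled well enough that the $\delta_n$ corrections vanish asymptotically. The outer reductions and the final limiting argument are routine once those facts --- together with LOCC monotonicity, additivity of $E_{sq}$, and asymptotic continuity --- are in hand, all of which are established in the references cited in the statement.
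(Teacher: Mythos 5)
The paper does not actually prove this theorem---it is imported verbatim from the cited references---so there is no in-paper argument to compare against; your outline is a faithful reconstruction of the standard proof from those works (reduction of $E_D$ to $K_D$ via decoupling of distilled ebits from the purifying system, the private-state structure theorem for $\epsilon$-secure key, and LOCC monotonicity plus additivity/regularization and asymptotic continuity of $E_{sq}$ and $E_r$). One small point of care: in the relative-entropy branch one applies asymptotic continuity to $E_r$ at the $n$-copy level and only invokes $E_r^\infty \le E_r$ at the very end, rather than asserting asymptotic continuity of $E_r^\infty$ itself; with that reading your argument is exactly the one in the cited references.
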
 
Unfortunately, each of the above entanglement measures involves a complicated minimization and in fact, their evaluation represents an NP-hard/NP-complete computational problem \cite{Huang-2014a}.  It is therefore not surprising that very few instances are known in which any of these measures can be explicitly computed.  In this paper, we introduce a new class of quantum states for which all these measures can be evaluated.  Our strategy will be based on the notion of \textit{reversible secrecy}, which we describe next.

\subsection{Reversible Entanglement and Secret Key}

Dual to the task of entanglement distillation is the task of entanglement formation, which describes building a given state $\rho^{AB}$ using LOCC and an initial supply of ebits.  The \textbf{entanglement cost} $E_C$ of a mixed state $\rho^{AB}$ is the asymptotic optimal rate of ebit consumption for Alice and Bob to generate faithful copies of $\rho^{AB}$ by LOCC \cite{Hayden-2001a}.  The entanglement cost is obviously lower bounded by the distillable entanglement, and compared to the above entanglement measures, the following hierarchy holds \cite{Horodecki-2009b}:
\begin{equation}
\label{Eq:EntCompare}
\begin{cases}E_D(\rho^{AB})\\K_D(\Psi_{qqq})\end{cases}\leq\begin{cases} E_{r}(\rho^{AB})\\E_{sq}(\rho^{AB})\end{cases}\leq E_C(\rho^{AB})\leq E_F(\rho^{AB}),
\end{equation}
where the first inequality references Theorem \ref{prop_HHC}.
A state $\rho^{AB}$ is said to possess \textbf{reversible entanglement} entanglement if $E_D(\rho^{AB})=E_C(\rho^{AB})$.  Operationally this means that the entanglement in $\rho^{AB}$ can be concentrated and diluted at equal rates.

Recently, the phenomenon of \textbf{reversible secrecy}, which is the classical analog to reversible entanglement, was studied in  \cite{Chitambar-2015a}.  Here, one first identifies the key cost $K_C$ of a distribution $p_{XYZ}$ as the amount of secret correlations needed for Alice and Bob to asymptotically prepare $p_{XYZ}$ using classical LOPC \cite{Renner-2003a}.  The distribution is said to possess reversible secrecy if $K_D(p_{XYZ})=K_C(p_{XYZ})$.  A key result that we prove in Theorem \ref{Thm:Main-Quant} is that when $p_{XYZ}$ is a reversible distribution and $\ket{\Psi_{qqq}}$ is a qqq embedding, then Eq. \eqref{Eq:EntCompare} can be further upper bounded as
\begin{equation}
\label{Eq:EntCompareReversible}
\begin{cases}E_D(\rho^{AB})\\K_D(\Psi_{qqq})\end{cases}\leq\begin{cases} E_{r}(\rho^{AB})\\E_{sq}(\rho^{AB})\end{cases}\leq E_C(\rho^{AB})\leq E_F(\rho^{AB})\leq K_D(p_{XYZ}).
\end{equation}
By identifying a class of distributions in Section \ref{sec_dist} for which $K_D(\Psi_{qqq})=K_D(p_{XYZ})$, this chain of inequalities becomes tight and we are thus able to compute the various entanglement measures of $\rho^{AB}$.

\subsection{Classes of Classical Distributions Related to Secrecy Reversibility} \label{sec_dist}

For a tripartite distribution $p_{XYZ}$, we can define \emph{maximal conditional common function} $J_{XY|Z}=\{J_{XY|Z=z}: p_Z(z)>0\}$, where $J_{XY|Z=z}$ is the common information of a bipartite distribution $P_{XYZ}(x,y|Z=z)$ \cite{Gacs-1973a}. For completeness, we provide a self-contained introduction of $J_{XY|Z}$ in Appendix~\ref{app_CI}. 
 \medskip
 
A distribution $p_{XYZ}$ is said to be (\cite{Chitambar-2014c, Chitambar-2015a}): 
\begin{itemize} 
\item \textbf{Block independent} (BI) if $I(X:Y|J_{XY|Z}Z)=0$.  

\item \textbf{Uniform block independent} (UBI) if both $I(X:Y|J_{XY|Z}Z)=0$ and
$H(J_{XY|Z}|X)=H(J_{XY|Z}|Y)=0.$

\item \textbf{Uniform block independent under public discussion} (UBI-PD) if it is BI and there is a public communication protocol generating messages $M$ such that $p_{(MX)(MY)(ZM)}$ is UBI and $I(M:J_{XY|Z}|Z)=0$.

\item \textbf{Uniform block independent under public discussion and eavesdropper's local processing} (UBI-PD$\downarrow$) if there exists a channel $\overline{Z}|Z$ such that $p_{XY|\overline{Z}}$ is UBI with the required public communication $M$ also satisfying $I(Z:J_{XY|\overline{Z}}|M\overline{Z})=0$.

\item \textbf{Semi-unambiguous} \cite{Christandl-2007a} if $H(Z|XY)=0$.

\item \textbf{Unambiguous} \cite{Ozols-2014a} if $H(Z|XY)=0$ and $H(XY|J_{XY|Z}Z)=0$.
\end{itemize}
The relations between these distributions are depicted in Figure~1.

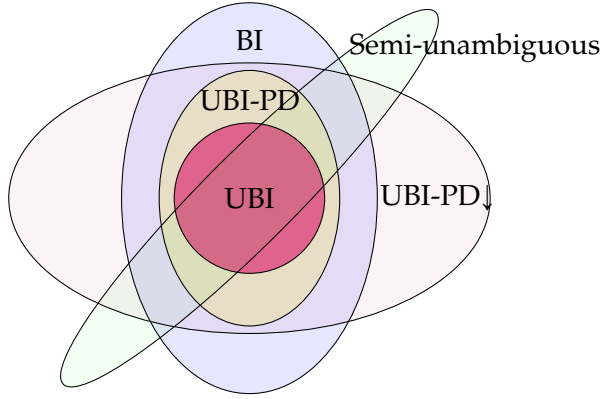
\begin{figure}\label{fig1}
\center
\begin{tikzpicture}
\def\homomorphism{(0,0) circle (1.0cm)}
\def\isomorphism{(0,0) ellipse (1.2cm and 1.7cm)}
\def\endomorphism{(0,0) ellipse (1.7cm and 2.6cm)}
\def\UBIPDDown{(0,0) ellipse (3.2cm and 1.8cm)}
\def\SUB{(-0,-0) ellipse (0.6cm and 3.5cm)}

      \begin{scope}[fill opacity=0.7]
        \fill[magenta] \homomorphism;
      \end{scope}

      \begin{scope}[fill opacity=0.1]
        \fill[blue] \endomorphism;
      \end{scope}

      \begin{scope}[fill opacity=0.2]
        \fill[yellow] \isomorphism;
      \end{scope}

      \begin{scope}[fill opacity=0.05]
        \fill[purple] \UBIPDDown;
      \end{scope}
      
      \begin{scope}[fill opacity=0.05]
        \fill[green][rotate=135] \SUB;
      \end{scope}

      \draw \homomorphism;
      \draw \isomorphism;
      \draw \endomorphism;
      \draw[rotate=135]  \SUB;
       \draw\UBIPDDown;
    {
        \scalefont{1.0}
        \node[text=black] at ( 0,0) {UBI};
      }

      {
        \scalefont{1.0}
        \node[text=black] at (0, 1.3) {UBI-PD};
      }

      {
        \scalefont{1.0}
        \node[text=black] at (  0, 2.1) {BI};
      }
      {
        \scalefont{1.0}
        \node[text=black] at (  2.5, 0) {UBI-PD$\downarrow$};
      }

      {
        \scalefont{1.0}
        \node[text=black] at (  3, 2) {Semi-unambiguous};
      }
    \end{tikzpicture}
    \caption{The relations between known classical distributions.}
\end{figure}

Being BI means that given $Z$, Alice and Bob share no more correlations besides their block number given by some maximal common function $J_{XY|Z}$.  For UBI distributions, the block number is independent of $Z$ and can therefore be computed locally by Alice and Bob.  Finally, for UBI-PD, the distribution becomes UBI once Alice shares with Bob her and Eve's common information, and Bob does likewise.  Of course, Eve learns nothing new about $XY$ through this public discussion. For semi-unambiguous distributions, the random variable $Z$ can be uniquely determined by random variables $X$ and $Y$; while for unambiguous distributions, each random variable can be uniquely determined by the other two random variables. 

The significance of these distributions to the problem of secrecy reversibility is given in the following theorem.
\begin{theorem}[\cite{Chitambar-2015a}]
\label{Thm:Reversible-Structure}
{\upshape(1)} If $K_C(p_{XYZ})=K_D(p_{XYZ})$ then there exists a channel for Eve $\overline{Z}|Z$ such that $p_{XY\overline{Z}}$ is BI.  {\upshape (2)} If $p_{XYZ}$ is UBI-PD$\downarrow$, then $K_C(p_{XYZ})=K_D(p_{XYZ})$.
\end{theorem}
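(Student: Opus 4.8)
The plan is to handle the two parts by different means: part (2) by an explicit pair of matching protocols, and part (1) by a structural converse.

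For part (2), I want to show that every UBI-PD$\downarrow$ distribution is reversible. Since $K_D(p_{XYZ})\le K_C(p_{XYZ})$ holds for every distribution, it is enough to produce a distillation rate and a formation rate that both equal $R:=H(J_{XY|\overline{Z}}\mid\overline{Z})$, where $\overline{Z}\mid Z$ is the eavesdropper channel supplied by the hypothesis. For the lower bound $K_D(p_{XYZ})\ge R$, Alice and Bob first run the public-discussion protocol $M$ guaranteed by the definition, which turns $p_{(MX)(MY)(\overline{Z}M)}$ into a uniform block-independent distribution; by uniformity the common function $J:=J_{XY|\overline{Z}}$ becomes a local function of each party's post-discussion data, so Alice and Bob share a perfectly correlated copy of $J$ and hold nothing else correlated given $J$ and $\overline{Z}$. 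The two vanishing-information conditions built into the definition, $I(M:J_{XY|\overline{Z}}\mid\overline{Z})=0$ and $I(Z:J_{XY|\overline{Z}}\mid M\overline{Z})=0$, say precisely that Eve's entire view $(Z,M)$ is no more informative about $J$ than $\overline{Z}$ alone, so ordinary privacy amplification extracts $\approx nR$ secret bits from $n$ i.i.d.\ copies. For the upper bound $K_C(p_{XYZ})\le R$, I would run this in reverse: starting from $\approx nR$ secret bits, Alice and Bob first generate a shared string distributed as $J^n$ along with a simulated transcript and a simulated copy of Eve's marginal, the same vanishing-information conditions making that simulation consistent; then each party locally samples $X\mid J$ on Alice's side and $Y\mid J$ on Bob's side, which is legitimate because for a UBI distribution $X$ and $Y$ are conditionally independent given $(J,\overline{Z})$. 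Conceptually undoing Eve's channel then recovers $p_{XYZ}^{\otimes n}$, so $K_C\le R\le K_D\le K_C$ and all three coincide.

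For part (1), I would argue structurally from reversibility, using a known single-letter quantity that sandwiches both rates, for instance the reduced intrinsic information: $K_D(p_{XYZ})\le I(X:Y\!\downarrow\!\downarrow Z)\le K_C(p_{XYZ})$. Reversibility forces these three to coincide, so the single-letter quantity is simultaneously tight as an upper bound on $K_D$ and as a lower bound on $K_C$. I would then analyze the structure of the optimizer realizing $I(X:Y\!\downarrow\!\downarrow Z)$ and extract from it a channel $\overline{Z}\mid Z$ for which double tightness forces the residual Alice--Bob correlation given $\overline{Z}$ to be carried entirely by the common function, i.e.\ $I(X:Y\mid J_{XY|\overline{Z}}\,\overline{Z})=0$, which is exactly block independence of $p_{XY\overline{Z}}$. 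The heuristic behind this last step is that any strictly positive residual conditional mutual information would be ``distilled away'' in a single distillation step yet could not be recreated by formation from pure secret key, contradicting $K_D=K_C$.

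The step I expect to be the main obstacle is the single-letterization inside part (1): passing from the asymptotic statement about optimal rates over $n\to\infty$ copies to a single-copy structural property of $p_{XYZ}$, which will require an asymptotic-continuity estimate to absorb the $\epsilon$'s in the i.i.d.\ definitions of $K_D$ and $K_C$. In part (2) the analogous delicate point is the bookkeeping that makes the formation protocol reproduce the joint distribution of Eve's variable together with the public transcript exactly; both points hinge on the precise interplay between the common function $J_{XY|Z}$ and the conditional-independence relations defining the UBI hierarchy of Figure~1.
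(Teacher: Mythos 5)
First, a point of comparison: the paper does not prove Theorem~\ref{Thm:Reversible-Structure} at all --- it is imported verbatim from \cite{Chitambar-2015a} --- so your proposal can only be judged on its own merits, and on those merits both halves have genuine gaps. For part (1), your general strategy (sandwich $K_D$ and $K_C$ by a single-letter quantity and exploit double tightness) is the right instinct, but the quantity you name is the wrong one: the reduced intrinsic information $I(X:Y\!\downarrow\!\downarrow\! Z)$ is only known to \emph{upper}-bound $K_D$; it sits \emph{below} the intrinsic information and is not known to lower-bound $K_C$. The correct sandwich is $K_D\leq I(X:Y\!\downarrow\! Z)\leq K_C$ with the unreduced intrinsic information $\min_{\overline{Z}|Z}I(X:Y|\overline{Z})$. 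Even after fixing that, the step you flag as a ``heuristic'' is precisely where the entire theorem lives: from $K_D=I(X:Y|\overline{Z}^*)$ for an optimal channel one has $I(X:Y|\overline{Z}^*)=H(J_{XY|\overline{Z}^*}|\overline{Z}^*)+I(X:Y|J_{XY|\overline{Z}^*}\overline{Z}^*)$, and killing the second term requires a separate structural theorem (that a distribution whose conditional mutual information $I(X:Y|\overline{Z})$ is an \emph{achievable} key rate must be block independent), not just the intuition that residual correlation ``could not be recreated by formation.'' One also needs attainment of the minimum over channels, which rests on the Christandl--Renner--Wolf cardinality bound on $\overline{\mathcal{Z}}$. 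None of this is supplied.

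For part (2), the distillation direction is essentially sound: UBI after the discussion $M$ makes $J:=J_{XY|\overline{Z}}$ locally computable, and the two vanishing-information conditions give $H(J|ZM)=H(J|\overline{Z})$, so the standard common-randomness/privacy-amplification rate applies. The formation direction, however, contains an error, not merely a gap: ``conceptually undoing Eve's channel'' to regenerate $Z$ from $\overline{Z}$ produces the joint law $p(x,y|\overline{z})\,p(z|\overline{z})\,p(\overline{z})$, which equals $p(x,y,z)$ only if $XY$ is conditionally independent of $Z$ given $\overline{Z}$. The Markov chain runs the other way ($XY - Z - \overline{Z}$), and in general $Z$ retains information about $XY$ beyond $\overline{Z}$ (that is typically the whole point of Eve's degrading map), so the reverse-sampled distribution is wrong. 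A correct formation protocol must publicly announce a variable from which $Z$ \emph{itself} can be sampled jointly consistently with $(X,Y)$, with the secret key spent only on the part of $J$ hidden from that announcement; this is where the cost $H(J_{XY|\overline{Z}}|\overline{Z})$ actually has to be earned, and your sketch does not do it.
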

In Ref. \cite{Chitambar-2015a}, it is shown that the necessary condition (1) and sufficient condition (2) are equivalent whenever either Alice or Bob holds a binary random variable.

\section{Main Results and Proofs}\label{sec_Main}

\subsection{Advantages in Quantum versus Classical Key Distillation}

\subsubsection{No advantages in incoherent embeddings.}

\begin{theorem}\label{thm:XYZ=ccc}
$K_D(p_{XYZ})=K_D(\rho_{ccc})$ for any distribution $p_{XYZ}$.
\end{theorem}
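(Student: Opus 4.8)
The statement splits into two inequalities, and the plan is to treat them separately. The direction $K_D(p_{XYZ}) \le K_D(\rho_{ccc})$ is immediate from the remark following Eq.~\eqref{Eq:LOPCsteps} that the classical operations (i)--(iv) are all special cases of quantum LOPC: since $\rho_{ccc}$ is exactly the state $\omega_{XYZ}$ of Eq.~\eqref{Eq:RVstate}, any classical LOPC protocol witnessing a rate $R$ for $p_{XYZ}$ is in particular a quantum LOPC protocol on $\rho_{ccc}^{\otimes n}$ witnessing the same $R$. All the content lies in the reverse inequality $K_D(\rho_{ccc}) \le K_D(p_{XYZ})$: quantum processing must be shown to give Alice and Bob no advantage when the source carries no coherence.

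To prove this I would take an arbitrary quantum LOPC protocol $\mc{P}_q$ witnessing a rate $R$ on $\rho_{ccc}^{\otimes n}$ and build a classical LOPC protocol $\mc{P}_c$ witnessing the same $R$. The first step uses that $\rho_{ccc}^{\otimes n}$ is diagonal in the computational basis, so Alice and Bob may each measure their registers in that basis at the outset without changing the state; write $\vec x$ and $\vec y$ for the outcomes, distributed as $p_{XYZ}^{\otimes n}$ together with Eve's $\vec z$. After this initial measurement Eve's register $E = Z^n$ is never acted upon again (Eve is a passive bystander throughout key distillation), and the remainder of $\mc{P}_q$ consists only of local instruments whose choice at each node of the protocol tree depends solely on the public transcript. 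I would then let $\mc{P}_c$ be the classical protocol in which Alice and Bob each maintain a classical description of their own \emph{simulated} quantum register, at each round sample the measurement outcome from the Born rule (a local classical computation), broadcast exactly the bits $\mc{P}_q$ broadcasts, and finally measure their simulated registers in the computational basis to output key strings $k_A, k_B$.

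Two facts then make $\mc{P}_c$ achieve rate $R$. First, conditioned on the full public transcript $m$, the bipartite state that $\mc{P}_q$ produces on $\ket{\vec x}^A \ket{\vec y}^B$ is a product state $\xi^A_m(\vec x) \otimes \xi^B_m(\vec y)$ --- the standard fact that under LOCC from a product input Alice's and Bob's laboratories remain conditionally independent given all the classical communication, since they couple only through the transcript. Hence Alice's simulated register equals $\xi^A_m(\vec x)$, Bob's equals $\xi^B_m(\vec y)$, and the joint state of $(k_A, k_B, M, E)$ output by $\mc{P}_c$ coincides with the reduced state of $(\Delta_A \otimes \Delta_B)[\mc{P}_q(\rho_{ccc}^{\otimes n})]$ on those systems, where $\Delta_A, \Delta_B$ are the computational-basis dephasings of $A$ and $B$. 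Second, the target $\Phi^{AB}_{\lfloor nR - \epsilon \rfloor} \otimes (\omega_M^{\text{global}} \otimes \sigma^E)$ is diagonal on $AB$, hence fixed by $\Delta_A \otimes \Delta_B$; since this channel (and the partial trace onto the key registers) is trace-norm non-increasing, the output of $\mc{P}_c$ is at least as close in trace norm to the target as the output of $\mc{P}_q$ was, so $\mc{P}_c$ satisfies the defining inequality~\eqref{Eq:KeyrateDefn} with the same $R$. Taking suprema over protocols gives $K_D(\rho_{ccc}) \le K_D(p_{XYZ})$.

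The step I expect to require the most care is the product-state structure of the post-transcript state together with the claim that Alice and Bob can faithfully simulate their halves of $\mc{P}_q$: one must check that each instrument is chosen using only the publicly shared transcript, that privately held measurement outcomes never induce $A$--$B$ correlations beyond those already carried classically by $p_{XYZ}^{\otimes n}$, and that a party can classically compute the Born-rule probabilities for its own simulated register. This is essentially a statement about the tree structure of LOPC protocols; once it is in place, the initial computational-basis measurement, the final dephasing, and the trace-norm contraction are all routine.
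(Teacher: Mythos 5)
Your proposal is correct and follows essentially the same route as the paper's proof: the paper likewise handles the easy direction by noting classical LOPC is a special case of quantum LOPC, and for the converse it dephases the output (using that the target is diagonal and the trace norm is monotone) and then exhibits the classical simulation explicitly, writing out the factorized per-round transcript probabilities $Pr[i_k\mid i_{<k},\mbf{x}]$ and $Pr[i_k\mid i_{<k},\mbf{y}]$ and the final local channels $Pr[\mbf{x}'\mid i_{\leq r},\mbf{x}]$, $Pr[\mbf{y}'\mid i_{\leq r},\mbf{y}]$ --- which are exactly the Born-rule sampling steps of your ``simulated local register'' description. The product structure conditioned on the transcript that you flag as the delicate point is precisely what the paper verifies by computing that $Pr[i_{\leq r}\mid\mbf{x},\mbf{y},\mbf{z}]$ splits into a product of Alice-only and Bob-only factors.
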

\begin{proof}
The inequality $K_D(\rho_{ccc})\geq K_D(p_{XYZ})$ is immediate from the fact every classical protocol $\mc{P}_c$ is a special type of quantum protocol $\mc{P}_q$.  Now we turn to the converse $K_D(\rho_{ccc})\geq K_D(p_{XYZ})$.  The idea will be to show that every quantum LOPC protocol $\mc{P}_q$ distilling secret key can be transformed into a classical protocol $\mc{P}_c$ that distills the same amount of key.  Suppose that $||\mc{P}_q(\rho_{ccc}^{\otimes n})-\Phi^{AB}_s\otimes(\omega_M^{\text{global}}\otimes \sigma^E)||<\epsilon$ with $\sigma^E=tr_{AB}(\rho_{ccc}^{\otimes n})$ and $\mc{P}_q$ some $r$-round quantum LOPC protocol.  To perform the following analysis let's fix some notation.  First, without loss of generality, let's assume that $r$ is even with Alice (resp. Bob) measuring in all the odd-numbered (resp. even-numbered) rounds.  We let $i_{\leq k}$ denote a particular sequence of the first $k$ rounds with $i_{<k}:=i_{\leq k-1}$.  If we wish to refer to a specific outcome in the $k^{th}$ round, we will denote this by $i_k$.  Hence $i_{\leq k}=(i_1,i_2,\cdots, i_k)$ for some particular sequence.  Finally, if, say, Alice is the measuring party in the $k^{th}$ round, we denote her local instrument conditioned on outcome $i_{<k}$ by $(\mc{A}^{(i<k)}_{i_k})_{i_k}$.  If we wish to speak of the full composition of Alice's CP maps corresponding to the outcome sequence $i_{\leq k}$, we will denote this simply by $\mc{A}^{(i_{\leq k})}$, with no subscript.  That is (for odd-numbered $k$) we have
\[\mc{A}^{(i_{\leq k})}=\mc{A}^{(i_{<k})}_{i_k}\circ\mc{A}^{(i_{\leq k-2})}=\mc{A}^{(i_{<k})}_{i_k}\circ \mc{A}^{(i_{<k-2})}_{i_{k-2}}\circ\cdots\circ\mc{A}_{i_1},\]
and similarly Bob's action is described by
\[\mc{B}^{(i_{\leq k-1})}=\mc{B}^{(i_{<k-1})}_{i_{k-1}}\circ\mc{B}^{(i_{\leq k-3})}=\mc{B}_{i_{k-1}}^{(i_{<k-1})}\circ\mc{B}_{i_{<k-3}}^{(i_{<k-3})}\circ\cdots\circ\mc{B}_{i_2}^{(i_1)}.\]

With the notation in hand, when performing protocol $\mc{P}_q$ on $\rho_{ccc}^{\otimes n}$, we can describe the overall state of generated across all outcome branches by
\begin{align}
&\mc{P}_q(\rho_{ccc}^{\otimes n})\notag\\
&=\sum_{\mbf{x},\mbf{y},\mbf{z}}\sum_{i_{\leq r}}p^n(\mbf{x},\mbf{y},\mbf{z})&\mc{A}^{(i_{\leq r-1})}\otimes\mc{B}^{(i_{\leq r})}(\op{\mbf{x}\mbf{y}}{\mbf{x}\mbf{y}})\otimes \op{\mbf{z}}{\mbf{z}}^E\otimes\op{i_{\leq r}}{i_{\leq r}}^{\text{global}},
\end{align}
where the first sum is over $\mc{X}^n\times\mc{Y}^n\times\mc{Z}^n$ with $p^n(\mbf{x},\mbf{y},\mbf{z})$ being the $n$-fold product distribution of $p_{XYZ}$, and the second sum is over all possible measurement sequences.  If Alice and dephase $\mc{P}_q(\rho_{ccc}^{\otimes n})$ in the computational basis, the resulting state will be at least $\epsilon$-close to $\Phi^{AB}_s\otimes(\omega_M^{\text{global}}\otimes \sigma^E)$ by monotonicity of the trace norm.  Hence it suffices to show that this dephased state $\Delta\left(\mc{P}_q(\rho_{ccc})\right)$ can be generated using classical LOPC.  To see that this is possible, we repeatedly use the fact that the messages are generated locally to form the expansion
\begin{align}
\Delta\left(\mc{P}_q(\rho_{ccc}^{\otimes n})\right)=\sum_{\mbf{x}',\mbf{y}'}\sum_{\mbf{x},\mbf{y},\mbf{z}}\sum_{i_{\leq r}}&Pr[\mbf{x}',\mbf{y}'|i_{\leq r},\mbf{x},\mbf{y},\mbf{z}]p^n(\mbf{x},\mbf{y},\mbf{z})\notag\\
&\cdot\op{\mbf{x}'\mbf{y}'}{\mbf{x}'\mbf{y}'}^{AB}\otimes\op{\mbf{z}}{\mbf{z}}^E\otimes\op{i_{\leq r}}{i_{\leq r}}^{\text{global}}
\end{align}
where
\begin{align}
Pr[\mbf{x}',\mbf{y}'|i_{\leq r},\mbf{x},\mbf{y},\mbf{z}]=\frac{\bra{\mbf{x}'}\mc{A}^{(i_{\leq r-1})}(\op{\mbf{x}}{\mbf{x}})\ket{\mbf{x}'}\cdot\bra{\mbf{y}'}\mc{B}^{(i_{\leq r})}(\op{\mbf{y}}{\mbf{y}})\ket{\mbf{y}'}}{Pr[i_{\leq r}|\mbf{x},\mbf{y},\mbf{z}]}
\end{align}
and
\begin{align}
Pr[i_{\leq r}|\mbf{x},\mbf{y},\mbf{z}]&=tr[\mc{A}^{(i_{\leq r-1})}\otimes\mc{B}^{(i_{\leq r})}(\op{\mbf{x}\mbf{y}}{\mbf{x}\mbf{y}})]\notag\\
&=\prod_{\text{even $k$}}^r\frac{tr[\mc{A}^{(i_{\leq k-1})}\otimes\mc{B}^{(i_{\leq k})}(\op{\mbf{x}\mbf{y}}{\mbf{x}\mbf{y}})]}{tr[\mc{A}^{(i_{\leq k-1})}\otimes\mc{B}^{(i_{\leq k-2})}(\op{\mbf{x}\mbf{y}}{\mbf{x}\mbf{y}}]}\notag\\
&\qquad\times\prod_{\text{odd $k$}}^r\frac{tr[\mc{A}^{(i_{\leq k})}\otimes\mc{B}^{(i_{\leq k-1})}(\op{\mbf{x}\mbf{y}}{\mbf{x}\mbf{y}})]}{tr[\mc{A}^{(i_{\leq k-2})}\otimes\mc{B}^{(i_{\leq k-1})}(\op{\mbf{x}\mbf{y}}{\mbf{x}\mbf{y}}]}\notag\\
&=\prod_{\text{even $k$}}^r\frac{tr[\mc{B}^{(i_{\leq k})}(\op{\mbf{y}}{\mbf{y}})]}{tr[\mc{B}^{(i_{\leq k-2})}(\op{\mbf{y}}{\mbf{y}}]}\times\prod_{\text{odd $k$}}^r\frac{tr[\mc{A}^{(i_{\leq k})}(\op{\mbf{x}}{\mbf{x}})]}{tr[\mc{A}^{(i_{\leq k-2})}(\op{\mbf{x}}{\mbf{x}}]}.
\end{align}
Thus a classical protocol $\mc{P}_c$ generating $\Delta(\mc{P}_q(\rho_{ccc}^{\otimes n}))$ is the following:
\begin{enumerate}
\item In the first round, Alice measures her variable $\mbf{x}$ and broadcasts message $i_1$ with probability $Pr[i_1|\mbf{x}]=tr[\mc{A}_{i_1}(\op{\mbf{x}}{\mbf{x}})]$.
\item In every subsequent even-numbered (resp. odd-numbered) round $k$, Bob (resp. Alice) consults the message history $i_{<k}$ and broadcasts $i_k$ with probability 
\begin{align}
Pr[i_k|i_{<k},\mbf{y}]&=\frac{tr[\mc{B}_{i_k}^{(i_{<k})}\circ\mc{B}^{(i_{\leq k-2})}(\op{\mbf{y}}{\mbf{y}})]}{tr[\mc{B}^{(i_{\leq k-2})}(\op{\mbf{y}}{\mbf{y}})]}\\
\bigg(\text{resp.}\quad Pr[i_k|i_{<k},\mbf{x}]&=\frac{tr[\mc{A}_{i_k}^{(i_{<k})}\circ\mc{A}^{(i_{\leq k-2})}(\op{\mbf{x}}{\mbf{x}})]}{tr[\mc{A}^{(i_{\leq k-2})}(\op{\mbf{x}}{\mbf{x}})]}\bigg).
\end{align}
\item At the end of $r$ rounds with the total message $i_{\leq r}$ having been generated, Alice and Bob process their variables using local channels $\mbf{x}\to\mbf{x}'$ and $\mbf{y}\to\mbf{y}'$ with transition probabilities given by
\begin{align}
Pr[\mbf{x}'|i_{\leq r},\mbf{x}]&=\frac{\bra{\mbf{x}'}\mc{A}^{(i_{\leq r-1})}(\op{\mbf{x}}{\mbf{x}})\ket{\mbf{x}'}}{tr[\mc{A}^{(i_{\leq r-1})}(\op{\mbf{x}}{\mbf{x}})]},\\
Pr[\mbf{y}'|i_{\leq r},\mbf{y}]&=\frac{\bra{\mbf{y}'}\mc{B}^{(i_{\leq r})}(\op{\mbf{y}}{\mbf{y}})\ket{\mbf{y}'}}{tr[\mc{B}^{(i_{\leq r})}(\op{\mbf{y}}{\mbf{y}})]}.
\end{align}
\item It can be seen that the state generated through this process is precisely $\Delta(\mc{P}_q(\rho_{ccc}^{\otimes n}))$.
\end{enumerate}
\end{proof}

\subsubsection{Arbitrarily large advantages in coherent embeddings.}

\begin{theorem}
\label{Prop:KDEFgap1}
For any $N$, a distribution $p_{XYZ}$ exists that such that when embedding $p_{XYZ}$ into a coherent quantum source, one of the following relationships holds:
\begin{itemize}
\item[(a)] Eve gains an arbitrarily large advantage: $K_D(p_{XYZ})-K_D(\Psi_{qqq}^{ABE})>N$, or
\item[(b)] Alice and Bob gain an arbitrarily large advantage: $K_D(\Psi_{qqq}^{ABE})-K_D(p_{XYZ})>N$.
\end{itemize}
\end{theorem}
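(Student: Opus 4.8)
The plan is to prove the two items by separate explicit constructions, in each case choosing the phases $\varphi_{xyz}$ of the coherent embedding so as to control where the entanglement ``lives'': for part~(a) we push all of it into the $AB|E$ cut (so that $\rho^{AB}$ becomes separable and $K_D(\Psi_{qqq})$ collapses to $0$), while for part~(b) we choose phases that create entanglement only across the $A|B$ cut even though the underlying dephased distribution has $X$ and $Y$ uncorrelated. In both cases the coherent source is a \emph{pure} tripartite state that factorizes across $AB$ versus $E$, which reduces the computation of $K_D(\Psi_{qqq})$ to the entropy of entanglement of its bipartite part and to the separability bound $K_D(\Psi_{qqq})\le E_r(\rho^{AB})$ of Theorem~\ref{prop_HHC}.

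For part~(a), fix $d=2^{N+1}$ and let $p_{XYZ}$ be uniform on the $d^2$ triples $(x,x,z)$, $x,z\in\{0,\dots,d-1\}$; that is, $X=Y$ is uniform and $Z$ is uniform and independent of $XY$. Then Alice and Bob already hold $\log d$ perfect secret bits, so $K_D(p_{XYZ})=\log d=N+1>N$. (This $p_{XYZ}$ is in fact UBI and hence reversible, so part~(a) also illustrates---and shows the looseness of---Theorem~\ref{thm:ReversibilityAdvantages}.) Now choose the phases so that the coherent embedding is
\[
\ket{\Psi_{qqq}}=\frac{1}{\sqrt d}\sum_{x}\ket{x}^A\ket{x}^B\ket{\hat x}^E,\qquad
\ket{\hat x}=\frac{1}{\sqrt d}\sum_{z}e^{2\pi i xz/d}\ket{z},
\]
which is a legitimate qqq embedding of $p_{XYZ}$ since the $\{\ket{\hat x}\}$ are orthonormal and dephasing all three parties in the computational basis returns exactly $p_{XYZ}$. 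Here $\rho^{AB}=\tr_E\op{\Psi_{qqq}}{\Psi_{qqq}}=\tfrac1d\sum_x\op{xx}{xx}$ is separable, so $E_r(\rho^{AB})=0$ and Theorem~\ref{prop_HHC} gives $K_D(\Psi_{qqq})\le E_r(\rho^{AB})=0$. Hence $K_D(p_{XYZ})-K_D(\Psi_{qqq})=\log d>N$. Intuitively, the coherent source lets Eve keep $\ket{\hat x}$ and read off $x$ in the Fourier basis, whereas in the incoherent source she is forced to measure in the fixed basis and obtains pure noise.

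For part~(b), start from the two-qubit state $\ket{\eta}=\tfrac{1}{\sqrt2}\big(\ket{0}^A\ket{+}^B+\ket{1}^A\ket{-}^B\big)$. This state is maximally entangled (its reduced state on $A$ is maximally mixed), hence LU-equivalent to an ebit; and dephasing both $A$ and $B$ in the computational basis produces the uniform distribution on $\{0,1\}^2$, so $X$ and $Y$ become independent. Fix $m=N+1$, and let $p_{XYZ}$ be the $m$-fold product of this dephased distribution tensored with a trivial $Z$; then $X\perp Y$ gives $K_D(p_{XYZ})=0$. The state $\ket{\Psi_{qqq}}=\ket{\eta}^{\otimes m}\otimes\ket{0}^E$ is a qqq embedding of $p_{XYZ}$ for a suitable $\pm1$ choice of phases, and it factorizes across $AB|E$ with $\rho^{AB}=\op{\eta}{\eta}^{\otimes m}$ pure of Schmidt rank $2^m$. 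Therefore $K_D(\Psi_{qqq})\ge S(\rho^A)=m$: Alice and Bob rotate to standard form, concentrate $\approx m$ ebits, and dephase, the resulting key being secret because every public message is a function of $A,B$ only while $E$ stays in a fixed product state. Thus $K_D(\Psi_{qqq})-K_D(p_{XYZ})\ge m>N$. (By Theorem~\ref{thm:ReversibilityAdvantages} this $p_{XYZ}$ is necessarily non-reversible.)

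Everything else is bookkeeping: that $K_D(p_{XYZ})$ equals $\log d$ for a uniformly-correlated pair with independent $Z$ and equals $0$ when $X\perp Y$; that a pure source factorizing across $AB|E$ has $K_D$ at least the entropy of entanglement of its bipartite part (achievability by concentrate-then-dephase, with secrecy inherited from the decoupling of $E$) and at most $E_r(\rho^{AB})$ by Theorem~\ref{prop_HHC}; and that a separable $\rho^{AB}$ has $E_r=0$. The one genuine step---and the mild obstacle---is the phase engineering in part~(a): one must check that inserting the Fourier phases $e^{2\pi i xz/d}$ turns Eve's otherwise-useless classical record into a perfect conjugate-basis quantum copy of Alice and Bob's value, thereby rendering $\rho^{AB}$ separable while leaving the dephased distribution, and hence $K_D(p_{XYZ})$, unchanged.
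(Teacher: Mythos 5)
Your constructions are correct and they establish the theorem, but they take a genuinely different route from the paper's, especially for part (a). The paper's part (a) works with a fixed two-qubit UBI distribution, computes $E_F(\rho^{AB})$ exactly via the concurrence formula, shows $K_D(p_{XYZ})>E_F(\rho^{AB})$ per copy by a convexity argument, and then amplifies the gap by taking tensor powers (additivity of the UBI key rate versus subadditivity of $E_F$). You instead take a large-alphabet perfectly correlated pair with an independent $Z$ and engineer Fourier phases so that Eve's conditional states become orthonormal; this makes $\rho^{AB}$ exactly separable, so $E_r(\rho^{AB})=E_F(\rho^{AB})=0$ and Theorem~\ref{prop_HHC} kills $K_D(\Psi_{qqq})$ outright. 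Your version is shorter, needs no concurrence computation or many-copy argument, and still delivers the paper's ``stronger result'' $K_D(p_{XYZ})-E_F(\rho^{AB})>N$. The trade-off is that your separation in (a) is entirely phase-driven: with the all-zero phases the same distribution embeds to a maximally entangled $\rho^{AB}$ and the gap vanishes, whereas the paper's example exhibits the gap already for the canonical nonnegative-amplitude embedding. Since the paper's definition of a qqq embedding explicitly allows arbitrary $\varphi_{xyz}$, exhibiting one embedding with the gap does satisfy the statement as written, but you should say explicitly that you are choosing a particular embedding. Part (b) is essentially the paper's argument (pure state decoupled from Eve, classical rate $=I(X:Y)$, quantum rate $=S(\rho^A)$ via concentration plus measurement), just with a cleaner instance where $I(X:Y)=0$ while $S(\rho^A)=1$ per copy.

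One parenthetical in your part (b) is wrong and worth removing: you assert that ``by Theorem~\ref{thm:ReversibilityAdvantages} this $p_{XYZ}$ is necessarily non-reversible.'' But your $p_{XYZ}$ there has $X$ and $Y$ independent and uniform with a trivial $Z$, so Alice and Bob can form it locally with no secret correlations at all, i.e.\ $K_C(p_{XYZ})=K_D(p_{XYZ})=0$ and the distribution \emph{is} reversible. What your construction actually shows is that Theorem~\ref{thm:ReversibilityAdvantages} (and the chain of inequalities in the proof of Theorem~\ref{Thm:Main-Quant}(a)) cannot hold uniformly over all phase choices: the state $\ket{\eta}=\tfrac12(\ket{00}+\ket{01}+\ket{10}-\ket{11})$ is a qqq embedding of a block-independent distribution in which the block-internal phases do not factorize as $\alpha_{xz}+\beta_{yz}$, so the conditional states $\ket{\varphi_z}$ are entangled within a block and the inequality $S(\sigma^A_{(\overline z)})\leq S(\widehat\sigma^A_{(\overline z)})$ used there fails. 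This does not affect the validity of your proof of Theorem~\ref{Prop:KDEFgap1}, which nowhere relies on Theorem~\ref{thm:ReversibilityAdvantages}, but the side remark should be deleted or turned into the observation above.
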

\begin{remark}
In the proof of (a) we actually demonstrate a much stronger result that $K_D(p_{XYZ})-E_F(\rho^{AB})>N$.  This means that we can distill key from $p_{XYZ}$ from a considerably higher rate than the rate of entanglement needed to generate the corresponding quantum state $\rho^{AB}$.  To our knowledge, this is the first known result of its kind.
\end{remark}
\begin{proof}
(a) We consider a very simple binary distribution with $p(0,0,0)=p(1,1,0)=1/4$, $p(0,0,1)=\lambda/2$, and $p(1,1,1)=(1-\lambda)/2$.  This is a UBI distribution and by Theorem \ref{Thm:Reversible-Structure} below, its key rate precisely $K_D(p_{xyz})=[1+h(\lambda)]/2$, where $h(x)=-x\log x-(1-x)\log(1-x)$.  The corresponding qqq embedding has the form
\[\ket{\Psi_{qqq}^{ABE}}=\sqrt{1/2}[\ket{\Phi}^{AB}\ket{0}^E+(\sqrt{\lambda}\ket{00}+\sqrt{1-\lambda}\ket{11})^{AB}\ket{1}^E].\]
Since $\rho^{AB}$ is a two-qubit state, its entanglement of formation can be calculated using the celebrated concurrence formula \cite{Wootters-1998a} (see also \cite{Chitambar-2015a}), and it is found to be 
\[E_F(\rho^{AB})=h\left([1+\sqrt{1-\left(1+\sqrt{\lambda(1-\lambda)}\right)^2)}]/2\right).\]
A simple convexity argument shows that $K_D(p_{XYZ})>E_F(\rho^{AB})$ whenever $0<\lambda<1/2$.  We now consider $n$ copies of $p_{XYZ}$.  Inspection reveals that $p_{XYZ}^{\otimes n}$ is also UBI for any $n$.  Thus, $K_D(p_{XYZ})=n[1+h(\lambda)]/2$.  On the other hand, the entanglement of formation is a sub-additive quantity such that $E_F\left((\rho^{AB})^{\otimes n}\right)\leq n E_F(\rho^{AB})$.  Consequently, for any $0<\lambda<1/2$ we attain an arbitrarily large gap between $K_D(p_{XYZ}^n)$ and $E_F\left((\rho^{AB})^{\otimes n}\right)$.  By Theorem \ref{prop_HHC} and the fact that $E_F(\rho^{AB}\geq E_{sq}(\rho^{AB})$, this gap will be at least as large as the gap between $K_D(p_{XYZ}^n)$ and $K_D\left((\Psi_{qqq}^{ABE})^{\otimes n}\right)$.

(b)  Consider the state $\ket{\Psi}^{ABE}=\sqrt{1/2}(\ket{00}+\ket{1+})^{AB}\ket{0}^E$ where Eve is initially uncorrelated.  This is a qqq encoding of a distribution $p_{XYZ}=p_{XY}\cdot q_Z$ whose mutual information is $I(X:Y)=1-h(1/3)\approx .311$.   Since Eve has no side information, the classical secret key rate $K_D(p_{XYZ})$ is equal to the mutual information, a well-known result in secret key agreement \cite{Maurer-1993a, Ahlswede-1993a}.  On the entanglement side, the reduced-state entropy characterizes the entanglement distillation rate for pure states \cite{Bennett-1996b}; hence $S(\rho^B)=E_D(\Psi^{ABE})$.  One bit of entanglement can be converted into one bit of secret key, and thus $K_D(\Psi^{ABE})\geq S(\rho_B)$.  In fact, this inequality is tight since $S(\rho^B)=E_{sq}(\Psi^{AB})\geq K_D(\Psi^{ABE})$.   Because both the mutual information and von Neumann entropy are additive, a similar argument to part (a) shows that the gap between $S(\left(\rho^{B}\right)^{\otimes n})-K_D(p_{XYZ})$ can be made arbitrarily large. 
\end{proof} 


\subsubsection{Advantages in one-sided versus two-sided coherent embeddings.}

\begin{center}\vspace{-.7cm}\end{center}

From Theorem \ref{thm:XYZ=ccc}, we see that the presence of quantum coherence is necessary for there to be a difference in quantum and classical key distillation.  The next example extends this result to states in which only one of the subsystems has quantum coherence.
\begin{lemma}
There exists states for which
\[K_D(\Psi_{qqq})>K_D(\rho_{cqq})>K_D(\rho_{ccq}).\]
In fact, the gaps between these rates can be arbitrarily large.
\end{lemma}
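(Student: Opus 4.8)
The non-strict chain $K_D(\rho_{ccq})\le K_D(\rho_{cqq})\le K_D(\Psi_{qqq})$ is immediate from Eq.~\eqref{Eq:dephase}: $\rho_{cqq}$ (resp.\ $\rho_{ccq}$) is obtained from $\ket{\Psi_{qqq}}$ (resp.\ $\rho_{cqq}$) by a local dephasing channel applied before the protocol starts, and $K_D$ cannot increase under such pre-processing. The content is therefore to make both inequalities strict with arbitrarily large gap, which I would do by taking $\ket{\Psi}=\ket{\Psi_A}\otimes\ket{\Psi_B}$, a tensor product of two gadgets, one responsible for each gap.

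\emph{Gadget $A$ (the $\rho_{cqq}$ vs.\ $\rho_{ccq}$ gap).} Let $\{\ket{u_k}\}_{k=1}^d$ be an orthonormal basis of $\mathbb{C}^d$ mutually unbiased to the computational basis, and set $\ket{\Psi_A}=\tfrac{1}{\sqrt{d}}\sum_k\ket{k}^A\ket{u_k}^B\ket{0}^E$. Eve is uncorrelated and $\rho^{AB}$ is pure with $\rho^A=I/d$, so $K_D(\Psi_A)=S(\rho^A)=\log d$, which is also an upper bound on $K_D(\rho_{A,cqq})$ since Alice's register carries only $\log d$ bits. This value is attained after Alice dephases: Bob measures $B$ in the $\{\ket{u_k}\}$ basis, recovers $k$ exactly, and the shared uniform variable is automatically secret from a trivial Eve, so $K_D(\rho_{A,cqq})=\log d$. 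In $\rho_{A,ccq}$, however, Bob has dephased as well, the two computational-basis outcomes have joint law $p(k,y)=|\ip{y}{u_k}|^2/d=1/d^2$, hence $X\perp Y$ and $K_D(\rho_{A,ccq})=0$.

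\emph{Gadget $B$ (the $\Psi_{qqq}$ vs.\ $\rho_{cqq}$ gap).} Here I would use an \emph{information-locking} ensemble $\{p_x,\ket{\phi_x}\}$ on $\mathbb{C}^{d'}$ whose Holevo quantity $\chi=S\!\left(\sum_x p_x\op{\phi_x}{\phi_x}\right)$ is close to $\log d'$ while its regularized accessible information $I^{\infty}_{\mathrm{acc}}$ is small (a constant, or $o(\log d')$); such ensembles are known. Put $\ket{\Psi_B}=\sum_x\sqrt{p_x}\,\ket{x}^A\ket{\phi_x}^B\ket{0}^E$, so $\rho^{AB}$ is pure with $S(\rho^A)=S(\rho^B)=\chi$ and $K_D(\Psi_B)=\chi$. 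Once Alice dephases, $\rho_{B,cqq}=\sum_x p_x\op{x}{x}^A\otimes\op{\phi_x}{\phi_x}^B$ has trivial Eve; as every distilled key bit is ultimately a classical value Bob extracts from his system, $K_D(\rho_{B,cqq})\le I^{\infty}_{\mathrm{acc}}(\{p_x,\phi_x\})$, which is small, and hence so is $K_D(\rho_{B,ccq})$.

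\emph{Combining, and the main obstacle.} For $\ket{\Psi}=\ket{\Psi_A}\otimes\ket{\Psi_B}$ the three embeddings factor over the two gadgets. Superadditivity of $K_D$ supplies the lower bounds; additivity of the squashed entanglement of a pure state gives the matching $K_D(\Psi_{qqq})\le E_{sq}(\rho^{AB})=\log d+\chi$; and for the composite classical--quantum state one still has $K_D(\rho_{cqq})\le\log d+I^{\infty}_{\mathrm{acc}}(\{p_x,\phi_x\})$. Therefore
\[
K_D(\Psi_{qqq})-K_D(\rho_{cqq})\ \ge\ \chi-I^{\infty}_{\mathrm{acc}}(\{p_x,\phi_x\}),\qquad K_D(\rho_{cqq})-K_D(\rho_{ccq})\ \ge\ \log d,
\]
and both bounds diverge as $d,d'\to\infty$. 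The genuinely delicate step is the upper bound $K_D(\rho_{cqq})\le I^{\infty}_{\mathrm{acc}}$ and its composite version: this is a converse for common-randomness distillation from a classical--quantum source, requiring one to rule out that collective measurements, local ancillas, and unlimited two-way public discussion beat the regularized accessible information — together with exhibiting a locking ensemble whose $\chi$-versus-accessible-information gap persists under regularization. The remaining additivity bookkeeping is routine.
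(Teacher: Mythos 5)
Your reduction of the lemma to two independent gadgets is a reasonable plan, and Gadget $A$ is sound: for $\ket{\Psi_A}=\tfrac{1}{\sqrt d}\sum_k\ket{k}^A\ket{u_k}^B\ket{0}^E$ with $\{\ket{u_k}\}$ mutually unbiased to the computational basis, one indeed gets $K_D(\rho_{A,cqq})=\log d$ while $\rho_{A,ccq}$ is an uncorrelated product state with $K_D=0$. That is a clean, different route to the $cqq$-versus-$ccq$ gap than the paper's (which uses one explicit $4\times4\times3$ state and squashed-entanglement upper bounds matched by explicit measurement protocols).

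Gadget $B$, however, is not merely delicate --- the bound you need is false. For a $cq$ source $\rho_{B,cqq}=\sum_x p_x\op{x}{x}^A\otimes\op{\phi_x}{\phi_x}^B$ with \emph{trivial} Eve, the Devetak--Winter protocol (one-way error correction from Alice followed by privacy amplification against both Eve and the public transcript) achieves the rate $I(X{:}B)-I(X{:}E)=\chi\bigl(\{p_x,\ket{\phi_x}\}\bigr)$. Information locking does not save you precisely because public discussion is free in LOPC: in the standard locking ensembles Alice can broadcast the basis label, Bob then measures and recovers $x$ exactly, and the broadcast bits are hashed away at negligible rate cost. So $K_D(\rho_{B,cqq})=\chi=S(\rho^B)=K_D(\Psi_B)$, and Gadget $B$ produces \emph{no} gap at all. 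More structurally: whenever Eve's register is uncorrelated, $K_D(\Psi_{qqq})\leq E_{sq}(\rho^{AB})=S(\rho^B)$ for the pure $\rho^{AB}$, while Devetak--Winter already gives $K_D(\rho_{cqq})\geq S(\rho^B)$, so no state of the form $\sum_x\sqrt{p_x}\ket{x}^A\ket{\phi_x}^B\ket{0}^E$ can separate $\Psi_{qqq}$ from $\rho_{cqq}$. Any working example must give Eve a nontrivial flag that becomes damaging only after Alice dephases; this is exactly what the paper's branch $[\ket{+2}+\ket{-3}]^{AB}\otimes\ket{1}^E$ accomplishes (Alice's dephasing kills the key in that branch, but Eve's flag $\ket{1}^E$ prevents Alice and Bob from simply discarding it for free). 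You would need to replace Gadget $B$ with a construction of this kind.
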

\begin{proof}
Consider the qqq embedded state
\begin{align}
\ket{\Psi_{qqq}}=\frac{1}{\sqrt{6}}\bigg([\ket{00}+\ket{11}]\otimes\ket{0}+[\ket{+2}+\ket{-3}]\otimes\ket{1}+[\ket{2+}+\ket{3-}]\otimes\ket{2}\bigg)
\end{align}
where $\ket{\pm}=\frac{1}{\sqrt{2}}(\ket{0}\pm\ket{1})$.  The two dephased states of interest are given by
\begin{align}
\rho_{cqq}&=\frac{1}{3}\Phi_2^{AB}\otimes\op{0}{0}^E+\frac{1}{6}(\op{0}{0}+\op{1}{1})^A\otimes(\op{2}{2}+\op{3}{3})^B\otimes\op{1}{1}^E\notag\\
&\qquad\qquad\qquad\qquad+\frac{1}{6}(\op{2+}{2+}+\op{3-}{3-})^{AB}\otimes\op{2}{2}^E,\\
\rho_{ccq}&=\frac{1}{3}\Phi_2^{AB}\otimes\op{0}{0}^E+\frac{1}{6}(\op{0}{0}+\op{1}{1})^A\otimes(\op{2}{2}+\op{3}{3})^B\otimes\op{1}{1}^,\notag\\
&\qquad\qquad\qquad\qquad+\frac{1}{6}(\op{2}{2}+\op{3}{3})^{A}\otimes(\op{0}{0}+\op{1}{1})^{B}\otimes\op{2}{2}^E.
\end{align}
From the squashed entanglement upper bound on $K_D$, we find that $K_D(\Psi_{qqq})\leq 1$, $K_D(\rho_{cqq})\leq 2/3$, and $K_D(\rho_{ccq})\leq 1/3$.  In fact, each of these bounds are tight using the following protocol.  For instance, with the state $\ket{\Psi_{qqq}}$, Alice and Bob both perform a projection into the $\{\ket{0},\ket{1}\}$ or $\{\ket{2},\ket{3}\}$ subspace.  If one of them indeed projects into $\{\ket{2},\ket{3}\}$, then the other party dephases in the $\{\ket{+},\ket{-}\}$ basis.  Finally, a local unitary rotation is made back to the $\{\ket{0},\ket{1}\}$ basis for both parties.  Doing so generates the state $\Phi_2^{AB}\otimes(\op{0}{0}+\op{1}{1}+\op{2}{2})^E/3$.  For $\rho_{cqq}$ a similar protocol is performed except that no key is obtained when Bob projects into the $\{\ket{2},\ket{3}\}$ subspace (which occurs with probability $1/3$).  And for $\rho_{ccq}$, secret key is obtained with probability $1/3$ when both parties projection into the $\{\ket{0},\ket{1}\}$ subspace.  Note that for $n$ copies, the upper bounds become $K_D(\Psi^{\otimes n}_{qqq})\geq n$, $K_D(\rho_{cqq}^{\otimes n})\geq 2/3 n$, and $K_D(\rho_{ccq}^{\otimes n})\geq 1/3 n$, which again are all tight.  So by considering many copies, we obtain examples where the gap between the different key rates is arbitrarily large.

\end{proof}

\subsection{Embedding Distributions with Reversible Secrecy}
\label{sec_ER}

We now consider qqq embeddings of distributions with reversible secrecy, for which it turns out that the quantum embedding favours Eve over Alice and Bob.  When adding UBI-PD and/or semi-unambiguous structure, relationships between key and quantum entanglement can be drawn.  
\begin{theorem}
\label{Thm:Main-Quant}
\begin{itemize}
\item[(a)]  If $p_{XYZ}$ has reversible secrecy (i.e. $K_C(p_{XYZ})=K_D(p_{XYZ})$), then 
\begin{equation}
K_D(p_{XYZ})\geq E_{sq}(\rho^{AB}).
\end{equation}
\item[(b)] If $p_{XYZ}$ is UBI-PD (and hence reversible), then 
\begin{equation}
K_D(p_{XYZ})\geq E_F(\rho^{AB}).
\end{equation}
\item[(c)] If $p_{XYZ}$ has reversible secrecy and is semi-unambiguous, then
\begin{equation}
K_C(p_{XYZ})=E_{sq}(\rho^{AB})=K_D(\Psi^{ABE})=K_D(p_{XYZ}).
\end{equation}
\item[(d)] If $p_{XYZ}$ has reversible secrecy, is semi-unambiguous and UBI-PD, then
\begin{multline*}
K_C(p_{XYZ})=K_D(p_{XYZ})=K_D(\Psi^{ABE})=E_F(\rho^{AB})\\=E_C(\rho^{AB})=E_D(\rho^{AB})=E_r(\rho^{AB})=E_{sq}(\rho^{AB})= H(J_{XY|Z}|Z).
\end{multline*}
\end{itemize}
\end{theorem}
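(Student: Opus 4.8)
The plan is to prove part (a) first, since parts (b)--(d) are layered on top of it. For (a) the only thing to establish is the single inequality $E_C(\rho^{AB})\le K_C(p_{XYZ})$; everything else then follows from $E_{sq}(\rho^{AB})\le E_C(\rho^{AB})$ (part of the chain in Eq.~\eqref{Eq:EntCompare}) together with the reversibility hypothesis $K_C(p_{XYZ})=K_D(p_{XYZ})$. To get $E_C(\rho^{AB})\le K_C(p_{XYZ})$ I would start from a near-optimal classical key-\emph{formation} protocol that builds $p_{XYZ}^{\otimes n}$ out of $\approx nK_C(p_{XYZ})$ bits of secret key plus public discussion and ``coherify'' it: replace the shared secret-key bits by the same number of ebits, dilate every local function/measurement to an isometry, and replace each public broadcast by the coherent creation of a register that is ultimately traced out into the environment. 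The coherified protocol is an LOCC protocol consuming $\approx nK_C(p_{XYZ})$ ebits whose output is a pure state on $A^nB^nE'$ that dephases to $p_{XYZ}^{\otimes n}$ and whose $E'$ system carries $Z^n$ and the transcript, so that discarding $E'$ yields a state $\epsilon$-close to $(\rho^{AB})^{\otimes n}$, giving $E_C(\rho^{AB})\le K_C(p_{XYZ})$. The delicate point here is verifying that the coherified protocol really outputs $(\rho^{AB})^{\otimes n}$ rather than a strictly more mixed state; this requires that the public communication reveal nothing about $XY$ beyond what $Z$ reveals, so that the transcript register can be kept coherent and pushed into the discarded environment without killing any of the coherence in $A^nB^n$, which is where one leans on the structure of reversible distributions (Theorem~\ref{Thm:Reversible-Structure}).

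For (b) I would write down an explicit block-structured decomposition of $\rho^{AB}$. Measuring Eve's system of $\ket{\Psi_{qqq}}^{ABE}$ in the computational basis realizes $\rho^{AB}=\sum_z p(z)\op{\varphi_z}{\varphi_z}$ with $\ket{\varphi_z}\propto\sum_{x,y}e^{i\varphi_{xyz}}\sqrt{p(x,y,z)}\ket{xy}$. Since a UBI-PD distribution is BI, for each $z$ the vector $\ket{\varphi_z}$ decomposes along $J_{XY|Z=z}$ into locally distinguishable blocks inside which $X$ and $Y$ are independent, so (after absorbing the within-block phases into local unitaries) its entanglement entropy is exactly $H(J_{XY|Z=z})$. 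Averaging gives $E_F(\rho^{AB})\le\sum_z p(z)H(J_{XY|Z=z})=H(J_{XY|Z}|Z)$, and for a UBI-PD distribution the right-hand side equals $K_D(p_{XYZ})$.

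For (c) the key new observation is that semi-unambiguity collapses the one-sided incoherent embedding: when $H(Z|XY)=0$ each $\ket{\psi_{xy}}=\sum_z e^{i\varphi_{xyz}}\sqrt{p(z|xy)}\ket z$ is, up to phase, a single computational basis vector, so $\rho_{ccq}=\rho_{ccc}$. Dephasing at Alice and Bob is a valid LOPC step and does not change Eve's reduced state, so $K_D(\Psi_{qqq})\ge K_D(\rho_{ccq})=K_D(\rho_{ccc})=K_D(p_{XYZ})$ by Theorem~\ref{thm:XYZ=ccc}. Chaining this with the hypothesis, part (a), and Theorem~\ref{prop_HHC} gives $K_C(p_{XYZ})=K_D(p_{XYZ})\ge E_{sq}(\rho^{AB})\ge K_D(\Psi_{qqq})\ge K_D(p_{XYZ})$, forcing all four quantities to coincide.

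Part (d) is then largely assembly. From (c), $E_{sq}(\rho^{AB})=K_D(\Psi_{qqq})=K_D(p_{XYZ})=K_C(p_{XYZ})$, which equals $H(J_{XY|Z}|Z)$ since the distribution is UBI-PD; from (b), $E_F(\rho^{AB})\le K_D(p_{XYZ})$. Plugging these into Eq.~\eqref{Eq:EntCompare} forces $E_{sq}\le E_C\le E_F\le E_{sq}$, hence $E_C=E_F=E_{sq}$; then $K_D(\Psi_{qqq})\le E_r(\rho^{AB})\le E_C=E_{sq}=K_D(\Psi_{qqq})$ (Theorem~\ref{prop_HHC}) gives $E_r=E_{sq}$. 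The only remaining point is $E_D(\rho^{AB})=H(J_{XY|Z}|Z)$: the bound $E_D\le E_{sq}$ is already in Eq.~\eqref{Eq:EntCompare}, and for the matching lower bound I would show the standard hashing bound $E_D(\rho^{AB})\ge S(\rho^B)-S(\rho^{AB})$ is tight, using that semi-unambiguity makes the $\ket{\varphi_z}$ mutually orthogonal (so $S(\rho^{AB})=H(Z)$) and that the combined UBI-PD/semi-unambiguous structure makes $Z$ locally recoverable by one party (after the public discussion), which forces $S(\rho^B)=H(J_{XY|Z}Z)$ and hence $E_D\ge H(J_{XY|Z}|Z)$. This last step is the second genuine obstacle: one must actually push the semi-unambiguous plus UBI-PD structure to an LOCC distillation of $\rho^{AB}$ at the full rate, not merely bound it above.
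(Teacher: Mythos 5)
Your parts (b) and (c) are essentially sound: (b) is the same ensemble argument the paper gives (the $\{p(z),\ket{\varphi_z}\}$ decomposition with per-block entanglement entropy $H(J_{XY|Z}|Z=z)$), and in (c) your observation that semi-unambiguity forces $\rho_{ccq}=\rho_{ccc}$, so that $K_D(\Psi_{qqq})\geq K_D(\rho_{ccq})=K_D(p_{XYZ})$ follows from Theorem~\ref{thm:XYZ=ccc}, is a nice self-contained replacement for the paper's citation of Christandl \emph{et al.} The problems are in (a) and in the $E_D$ claim of (d).

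For (a) you reduce everything to the lemma $E_C(\rho^{AB})\leq K_C(p_{XYZ})$, to be proved by ``coherifying'' a classical key-formation protocol. This is not the paper's route, and the gap you flag yourself is fatal rather than merely delicate. A classical formation protocol generates $X=f(K,M,R_A)$ and $Y=g(K,M,R_B)$ from key $K$, transcript $M$ and local randomness $R_A,R_B$; when you dilate it, \emph{all} of these registers purify the output, and whatever part of them is pushed into the discarded environment acts as a which-value record of $X^n Y^n$. Tracing it out dephases $A^nB^n$ in the computational basis, so the coherified protocol generically produces (a state close to) $\sum_{\mathbf{x},\mathbf{y}}p^n(\mathbf{x},\mathbf{y})\op{\mathbf{x}\mathbf{y}}{\mathbf{x}\mathbf{y}}$ rather than $(\rho^{AB})^{\otimes n}$, which retains off-diagonal coherences. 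To land on $\rho^{AB}$ you would need the environment to end up holding \emph{exactly} a coherent copy of $Z^n$ and nothing else, i.e.\ you would need to uncompute $K$, $M$, $R_A$, $R_B$ coherently; nothing in Theorem~\ref{Thm:Reversible-Structure} supplies that, since it constrains the structure of $p_{XYZ}$, not the reversibility of a formation protocol. Note also that the paper allows arbitrary phases $\varphi_{xyz}$, under which your lemma cannot hold in the claimed generality: for $X\perp Y$ with trivial $Z$ and non-factorizing phases one has $K_C(p_{XYZ})=0$ while $\rho^{AB}$ is a pure entangled state. The paper's actual proof of (a) never touches formation or $E_C$: it uses Theorem~\ref{Thm:Reversible-Structure} to obtain a channel $\overline{Z}|Z$ making $p_{XY\overline{Z}}$ block independent, builds from it an explicit extension $\sigma^{AB\overline{Z}}$ of $\rho^{AB}$ (Eve measures and applies the channel), and bounds $2E_{sq}(\rho^{AB})\leq I(A{:}B|\overline{Z})_{\sigma}\leq\sum_{\overline{z}}p(\overline{z})[S(\sigma^A_{(\overline{z})})+S(\sigma^B_{(\overline{z})})]\leq 2H(J_{XY|\overline{Z}}|\overline{Z})=2K_D(p_{XYZ})$, exploiting the block structure to control the local entropies. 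You should adopt that extension-based argument.

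In (d), the assembly of $E_{sq}=E_C=E_F=E_r$ from (b), (c) and Eq.~\eqref{Eq:EntCompare} is correct, but your lower bound on $E_D$ does not go through as written: semi-unambiguity says $(X,Y)$ \emph{jointly} determine $Z$, not that $Y$ alone does (even after the UBI-PD public discussion), so Bob's conditional states for distinct $z$ need not be orthogonal and $S(\rho^B)=H(J_{XY|Z}Z)$ is unjustified; the hashing bound therefore need not be tight by your argument. (The paper is admittedly terse on this point as well, but it does not rest the claim on $S(\rho^B)$.)
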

\begin{proof} 

(a) By Theorem \ref{Thm:Reversible-Structure}, if $p_{XYZ}$ is reversible then there must be a channel $\overline{Z}|Z$ such that $p_{XY\overline{Z}}$ is block independent.  In other words, there exists a decomposition 
\begin{equation}
p_{XY|\overline{Z}}(x,y|\overline{z})=\sum_{j_{\overline{z}}}p(x|j_{\overline{z}},\overline{z})p(y|j_{\overline{z}},\overline{z})p(j_{\overline{z}}|\overline{z})
\end{equation}
where $p_X(\cdot|j_{\overline{z}},\overline{z})$ and $p_X(\cdot|j_{\overline{z}}',\overline{z})$ are disjoint distributions for $j_{\overline{z}}\not=j_{\overline{z}}'$, and likewise for Bob's conditional distributions.  For each $\overline{z}$, define the local measurement channel acting on Alice's system
\begin{align}
\omega\mapsto\Omega^{(\overline{z})}_A(\omega)&:= \sum_{j_{\overline{z}}}\sum_{x\;\text{such that} \atop{\;p(x|j_{\overline{z}}},\overline{z})>0}\bra{x}\omega\ket{x}\op{j_{\overline{z}}}{j_{\overline{z}}}.
\end{align}
Let $\Omega^{(\overline{z})}_B$ be defined similarly for Bob's system.

 We next consider the decomposition $\ket{\Psi_{ABE}}=\sum_{z}\sqrt{p(z)}\ket{\varphi_{z}}\ket{z}$, in which $\ip{xy}{\varphi_z}=\sqrt{p(x,y|z)}e^{i\varphi_{xyz}}$.  Note that $\rho^{AB}=tr_{\overline{Z}}\sigma^{AB\overline{Z}}$ where
\begin{equation}
\label{Eq:AB-extension}
\sigma_{AB\overline{Z}}:=\sum_{\overline{z}}\sum_{z}p(z|\overline{z})p(\overline{z})\op{\varphi_z}{\varphi_z}\otimes\op{\overline{z}}{\overline{z}}=\sum_{\overline{z}}p(\overline{z})\sigma^{AB}_{(\overline{z})}\otimes\op{\overline{z}}{\overline{z}}.
\end{equation}
On the state $\sigma^{AB\overline{Z}}$, we first dephase in the computational basis, and then apply $\Omega^{(\overline{z})}_A\otimes\Omega^{(\overline{z})}_B$ conditioned on $\overline{z}$.  Doing so generates the state
\begin{align}
\widehat{\sigma}^{AB\overline{Z}}&:=\sum_{\overline{z}}p(\overline{z})\sum_{x,y}p(x,y|\overline{z})\Omega^{(\overline{z})}_A(\op{x}{x})\otimes\Omega^{(\overline{z})}_B(\op{y}{y})\otimes\op{\overline{z}}{\overline{z}}\notag\\
&=\sum_{\overline{z}}p(\overline{z})\sum_{j_{\overline{z}}}p(j_{\overline{z}}|\overline{z})\op{j_{\overline{z}}j_{\overline{z}}\overline{z}}{j_{\overline{z}}j_{\overline{z}}\overline{z}}.
\end{align}
Hence,
\begin{align}
E_{sq}(\rho_{AB})&\leq \tfrac{1}{2}\sum_{\overline{z}}p(\overline{z})I(A:B)_{\sigma^{AB}_{(\overline{z})}}\notag\\
&\leq \tfrac{1}{2}\sum_{\overline{z}}p(\overline{z})[S(\sigma^A_{(\overline{z})})+S(\sigma^B_{(\overline{z})})]\notag\\
&\leq \tfrac{1}{2}\sum_{\overline{z}}p(\overline{z})[S(\widehat{\sigma}^A_{(\overline{z})})+S(\widehat{\sigma}^B_{(\overline{z})})]\notag\\
&=\sum_{\overline{z}}p(\overline{z})H(J_{\overline{Z}}|\overline{Z}=\overline{z})\notag\\
&= K_D(p_{XYZ}).
\label{Eq:squash1}
\end{align}

(b) If $p_{XYZ}$ is UBI-PD then again by Theorem \ref{Thm:Reversible-Structure}, we have $K_C(p_{XYZ})=K_D(p_{XYZ})=H(J_{XY|Z}|Z)$.  
Eq.~(\ref{Eq:AB-extension}) still holds in this case with $\overline{Z}=Z$, and $\sigma_{ABZ}$ is obtained from $\ket{\Psi_{ABE}}$ by Eve dephasing in the computational basis.  Block-independence of $p_{XYZ}$ means that $S(tr_A\op{\varphi_z}{\varphi_z})=H(J_{XY|Z}|Z=z)$.  Since $\{p(z),\ket{\varphi_z}\}$ provides a pure-state ensemble realizing $\rho^{AB}$, we see that $H(J_{XY|Z}|Z)\geq E_F(\rho^{AB})$.

(c)  From Theorem \ref{prop_HHC} and Ref. \cite{Christandl-2007a}, semi-unambiguous distributions are shown to satisfy the inequality $E_{sq}(\rho^{AB})\geq K_D(\Psi_{qqq}^{ABE})\geq K_D(p_{XYZ})$.  Combining with part (a) gives the desired result.

(d)  This follows from combining (c), (d) and Theorem \ref{prop_HHC} with the fact that $E_F(\rho^{AB})\geq E_C(\rho^{AB})\geq \max\{E_r(\rho^{AB}),E_{sq}(\rho^{AB})\}$.

\end{proof}

\begin{remark}
While Theorem \ref{Thm:Main-Quant} (a) implies that Alice and Bob never gain an advantage over Eve when embedding their correlations into a quantum source, it is not difficult to construct distributions in which Eve gains a non-zero advantage in the quantum setting.  This can be seen by the chain of inequalities in Eq. \eqref{Eq:squash1}.  In particular, whenever $\sigma_{\overline{z}}^{AB}$ is not pure the inequality will be strict.  This will hold, for instance, for any distribution $p_{XYZ}$ with a non-trivial channel $\overline{Z}|Z$ such that $p_{XY\overline{Z}}$ is UBI-PD.
\end{remark}

\section{Conclusion}

In this paper we have considered the task of resource distillation in quantum and classical sources.  Since secret key can be distilled from both quantum and classical states, a direct comparison can be made between the two scenarios.   Quantum states that are diagonal in some fixed basis - such as $\rho_{ccc}$ of Eq. \eqref{Eq:ccc} - lack coherence and are typically referred to as ``classical'' states since they possess the same entropic properties as classical probability distributions.  However, as quantum objects, these states can undergo certain physical transformations that are impossible for classical states.  We have shown that despite this enhanced dynamical ability, secret key distillation is equivalent for a classical distribution and its incoherent quantum embedding.  The situation is much different when the embedding takes the form of a coherent superposition.  We have presented examples when quantum and classical key rates can be vastly different; sometimes it benefits Alice and Bob to have their correlations embedded in a quantum state, sometimes it benefits Eve.  We have linked this investigation of quantum advantages to the problem of LOPC secrecy reversibility.  By introducing different families of distributions that demonstrate secrecy reversibility, we are able to compute the entanglement and distillable key of the embedded quantum states.  It is quite beautiful that notoriously difficult entanglement measures can be computed using exclusively a classical analysis of the states's underlying probability distribution.  We hope this paper helps advance our understanding of the relationship between classical secrecy and quantum entanglement.

\bibliographystyle{alphaurl}
\bibliography{QuantumBib}

\newcommand{\etalchar}[1]{$^{#1}$}
\begin{thebibliography}{BDSW96}

\bibitem[AC93]{Ahlswede-1993a}
R.~Ahlswede and I.~Csisz\'ar.
\newblock Common randomness in information theory and cryptography. i. secret
  sharing.
\newblock {\em Information Theory, IEEE Transactions on}, 39(4):1121--1132,
  1993.
\newblock \href {http://dx.doi.org/10.1109/18.243431}
  {\path{doi:10.1109/18.243431}}.

\bibitem[BB84]{Bennett-1984a}
C.~H. Bennett and G.~Brassard.
\newblock Quantum {C}ryptography: {P}ublic {K}ey {D}istribution and {C}oin
  {T}ossing.
\newblock In {\em Proceedings of the IEEE International Conference on
  Computers, Systems and Signal Processing}, pages 175--179. IEEE Press, Dec.
  1984.

\bibitem[BBPS96]{Bennett-1996b}
Charles~H. Bennett, Herbert Bernstein, Sandu Popescu, and Benjamin Schumacher.
\newblock Concentrating partial entanglement by local operations.
\newblock {\em Phys. Rev. A}, 53(4):2046--2052, Apr 1996.
\newblock \href {http://arxiv.org/abs/quant-ph/9511030}
  {\path{arXiv:quant-ph/9511030}}, \href
  {http://dx.doi.org/10.1103/PhysRevA.53.2046}
  {\path{doi:10.1103/PhysRevA.53.2046}}.

\bibitem[BDSW96]{Bennett-1996a}
Charles~H. Bennett, David~P. DiVincenzo, John~A. Smolin, and William~K.
  Wootters.
\newblock Mixed-state entanglement and quantum error correction.
\newblock {\em Phys. Rev. A}, 54(5):3824--3851, Nov 1996.
\newblock \href {http://dx.doi.org/10.1103/PhysRevA.54.3824}
  {\path{doi:10.1103/PhysRevA.54.3824}}.

\bibitem[CEH{\etalchar{+}}07]{Christandl-2007a}
Matthias Christandl, Artur Ekert, Michał Horodecki, Paweł Horodecki, Jonathan
  Oppenheim, and Renato Renner.
\newblock Unifying classical and quantum key distillation.
\newblock In SalilP. Vadhan, editor, {\em Theory of Cryptography}, volume 4392
  of {\em Lecture Notes in Computer Science}, pages 456--478. Springer Berlin
  Heidelberg, 2007.
\newblock \href {http://dx.doi.org/10.1007/978-3-540-70936-7_25}
  {\path{doi:10.1007/978-3-540-70936-7_25}}.

\bibitem[CFH15a]{Chitambar-2015a}
Eric Chitambar, Ben Fortescue, and Min-Hsiu Hsieh.
\newblock Classical analog to entanglement reversibility.
\newblock {\em Phys. Rev. Lett.}, 115:090501, Aug 2015.
\newblock URL: \url{http://link.aps.org/doi/10.1103/PhysRevLett.115.090501},
  \href {http://dx.doi.org/10.1103/PhysRevLett.115.090501}
  {\path{doi:10.1103/PhysRevLett.115.090501}}.

\bibitem[CFH15b]{Chitambar-2014c}
Eric Chitambar, Benjamin Fortescue, and Min-Hsiu Hsieh.
\newblock Distributions attaining secret key at a rate of the conditional
  mutual information.
\newblock In Rosario Gennaro and Matthew Robshaw, editors, {\em Advances in
  Cryptology -- CRYPTO 2015}, volume 9216 of {\em Lecture Notes in Computer
  Science}, pages 443--462. Springer Berlin Heidelberg, 2015.
\newblock URL: \url{http://dx.doi.org/10.1007/978-3-662-48000-7_22}, \href
  {http://dx.doi.org/10.1007/978-3-662-48000-7_22}
  {\path{doi:10.1007/978-3-662-48000-7_22}}.

\bibitem[CH14]{Chitambar-2013c}
Eric Chitambar and Min-Hsiu Hsieh.
\newblock Asymptotic state discrimination and a strict hierarchy in
  distinguishability norms.
\newblock {\em Journal of Mathematical Physics}, 55(11), 2014.
\newblock \href {http://dx.doi.org/http://dx.doi.org/10.1063/1.4902027}
  {\path{doi:http://dx.doi.org/10.1063/1.4902027}}.

\bibitem[Chr06]{Christandl-2006a}
Matthias Christandl.
\newblock The structure of bipartite quantum states: Insights from group theory
  and cryptography.
\newblock PhD Thesis, 2006.

\bibitem[CK11]{Csiszar-2011a}
Imre Csisz\'ar and Janos K\"orner.
\newblock {\em Information Theory: Coding Theorems for Discrete Memoryless
  Systems}.
\newblock Cambridge University Press, Cambridge, UK, 2011.

\bibitem[CLM{\etalchar{+}}14]{Chitambar-2014b}
Eric Chitambar, Debbie Leung, Laura Mančinska, Maris Ozols, and Andreas
  Winter.
\newblock Everything you always wanted to know about locc (but were afraid to
  ask).
\newblock {\em Communications in Mathematical Physics}, 328(1):303--326, 2014.
\newblock \href {http://dx.doi.org/10.1007/s00220-014-1953-9}
  {\path{doi:10.1007/s00220-014-1953-9}}.

\bibitem[CW04]{Christandl-2004a}
Matthias Christandl and Andreas Winter.
\newblock ``squashed entanglement'': An additive entanglement measure.
\newblock {\em Journal of Mathematical Physics}, 45(3):829--840, 2004.
\newblock \href {http://dx.doi.org/10.1063/1.1643788}
  {\path{doi:10.1063/1.1643788}}.

\bibitem[CW05]{1499048}
M.~Christandl and A.~Winter.
\newblock Uncertainty, monogamy, and locking of quantum correlations.
\newblock {\em Information Theory, IEEE Transactions on}, 51(9):3159--3165,
  Sept 2005.
\newblock \href {http://dx.doi.org/10.1109/TIT.2005.853338}
  {\path{doi:10.1109/TIT.2005.853338}}.

\bibitem[DL70]{Davies-1970a}
E.~Brian Davies and John~T. Lewis.
\newblock An operational approach to quantum probability.
\newblock {\em Comm. Math. Phys.}, 17(3):239--260, 1970.
\newblock \href {http://dx.doi.org/10.1007/BF01647093}
  {\path{doi:10.1007/BF01647093}}.

\bibitem[GK73]{Gacs-1973a}
P.~G\'{a}cs and J.~K\"{o}rner.
\newblock Common information is far less than mutual information.
\newblock {\em Problems of Control and Information Theory}, 2(2):149, 1973.

\bibitem[HHHH09]{Horodecki-2009a}
Ryszard Horodecki, Pawe\l Horodecki, Micha\l Horodecki, and Karol Horodecki.
\newblock Quantum entanglement.
\newblock {\em Rev. Mod. Phys.}, 81(2):865, 2009.
\newblock \href {http://dx.doi.org/10.1103/RevModPhys.81.865}
  {\path{doi:10.1103/RevModPhys.81.865}}.

\bibitem[HHHO05]{Horodecki-2005d}
Karol Horodecki, Micha\l{} Horodecki, Pawe\l{} Horodecki, and Jonathan
  Oppenheim.
\newblock Secure key from bound entanglement.
\newblock {\em Phys. Rev. Lett.}, 94:160502, Apr 2005.
\newblock \href {http://dx.doi.org/10.1103/PhysRevLett.94.160502}
  {\path{doi:10.1103/PhysRevLett.94.160502}}.

\bibitem[HHHO09]{Horodecki-2009b}
K.~Horodecki, M.~Horodecki, P.~Horodecki, and J.~Oppenheim.
\newblock General paradigm for distilling classical key from quantum states.
\newblock {\em Information Theory, IEEE Transactions on}, 55(4):1898--1929,
  April 2009.
\newblock \href {http://dx.doi.org/10.1109/TIT.2008.2009798}
  {\path{doi:10.1109/TIT.2008.2009798}}.

\bibitem[HHT01]{Hayden-2001a}
Patrick~M. Hayden, Michal Horodecki, and Barbara~M. Terhal.
\newblock The asymptotic entanglement cost of preparing a quantum state.
\newblock {\em J. Phys. A: Math. Gen.}, 34(35):6891, 2001.

\bibitem[Hua14]{Huang-2014a}
Yichen Huang.
\newblock Computing quantum discord is np-complete.
\newblock {\em New Journal of Physics}, 16(3):033027, 2014.

\bibitem[KKB11]{Kleinmann-2011a}
Matthias Kleinmann, Hermann Kampermann, and Dagmar Bru\ss{}.
\newblock Asymptotically perfect discrimination in the
  local-operation-and-classical-communication paradigm.
\newblock {\em Phys. Rev. A}, 84(4):042326, Oct 2011.
\newblock \href {http://dx.doi.org/10.1103/PhysRevA.84.042326}
  {\path{doi:10.1103/PhysRevA.84.042326}}.

\bibitem[Mau93]{Maurer-1993a}
U.M. Maurer.
\newblock Secret key agreement by public discussion from common information.
\newblock {\em Information Theory, IEEE Transactions on}, 39(3):733--742, 1993.
\newblock \href {http://dx.doi.org/10.1109/18.256484}
  {\path{doi:10.1109/18.256484}}.

\bibitem[OSS14]{Ozols-2014a}
Maris Ozols, Graeme Smith, and John~A. Smolin.
\newblock Bound entangled states with a private key and their classical
  counterpart.
\newblock {\em Phys. Rev. Lett.}, 112:110502, Mar 2014.
\newblock \href {http://dx.doi.org/10.1103/PhysRevLett.112.110502}
  {\path{doi:10.1103/PhysRevLett.112.110502}}.

\bibitem[Rai99]{Rains-1999a}
E.~M. Rains.
\newblock Rigorous treatment of distillable entanglement.
\newblock {\em Phys. Rev. A}, 60:173--178, Jul 1999.
\newblock \href {http://dx.doi.org/10.1103/PhysRevA.60.173}
  {\path{doi:10.1103/PhysRevA.60.173}}.

\bibitem[RW03]{Renner-2003a}
Renato Renner and Stefan Wolf.
\newblock New bounds in secret-key agreement: The gap between formation and
  secrecy extraction.
\newblock In {\em Advances in Cryptology — EUROCRYPT 2003}, volume 2656 of
  {\em Lecture Notes in Computer Science}, pages 562--577. Springer Berlin
  Heidelberg, 2003.
\newblock \href {http://dx.doi.org/10.1007/3-540-39200-9_35}
  {\path{doi:10.1007/3-540-39200-9_35}}.

\bibitem[VP98]{Vedral-1998a}
V.~Vedral and M.~B. Plenio.
\newblock Entanglement measures and purification procedures.
\newblock {\em Phys. Rev. A}, 57(3):1619--1633, Mar 1998.
\newblock \href {http://dx.doi.org/10.1103/PhysRevA.57.1619}
  {\path{doi:10.1103/PhysRevA.57.1619}}.

\bibitem[Woo98]{Wootters-1998a}
William~K. Wootters.
\newblock Entanglement of formation of an arbitrary state of two qubits.
\newblock {\em Phys. Rev. Lett.}, 80(10):2245--2248, Mar 1998.
\newblock \href {http://dx.doi.org/10.1103/PhysRevLett.80.2245}
  {\path{doi:10.1103/PhysRevLett.80.2245}}.

\end{thebibliography}

\section{Appendix}

\label{Appendix}

\subsection{Common Information}
\label{app_CI}
Let $X$ and $Y$ be random variables over finite sets $\mc{X}$ and $\mc{Y}$ respectively, with joint distribution $p_{XY}$. A \textbf{common partitioning of length $t$} for $X$ and $Y$ are pairs of subsets $(\mc{X}_i,\mc{Y}_i)_{i=1}^t$ such that 
\begin{itemize}
\item[(i)] $\mc{X}_i\cap\mc{X}_j=\mc{Y}_i\cap \mc{Y}_j=\emptyset$ for $i\not=j$, 
\item[(ii)] $p(\mc{X}_i|\mc{Y}_j)=p(\mc{Y}_i|\mc{X}_j)=\delta_{ij}$, and  
\item[(iii)] if $(x,y)\in\mc{X}_i\times \mc{Y}_i$ for some $i$, then $p_X(x)p_Y(y)>0$. 
\end{itemize}
For every common partitioning, we associate a random variable $J(X,Y)$ such that $J(x,y)=i$ if $(x,y)\in\mc{X}_i\times\mc{Y}_i$.  Note that each party can determine the value of $J$ from their local information, and it is therefore called a \textbf{common function} of $X$ and $Y$ \cite{Gacs-1973a}.  We will refer to the sets $\mc{X}_i\times\mc{Y}_i$ as the \textbf{blocks} of the given partitioning, and hence $J$ is the ``block number'' random variable for the particular common partitioning.  

A \textbf{maximal common partitioning} is a common partitioning of greatest length.  
\begin{proposition}
\label{Prop:Partition-Unique}
Every pair of finite random variables $X$ and $Y$ has a unique maximal common partitioning.
\end{proposition}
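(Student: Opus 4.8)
The plan is to identify the unique maximal common partitioning with the connected-component decomposition of the natural bipartite graph attached to $p_{XY}$. Let $G$ be the bipartite graph whose left vertices are $\{x\in\mc X : p_X(x)>0\}$, whose right vertices are $\{y\in\mc Y : p_Y(y)>0\}$, and which has an edge joining $x$ and $y$ precisely when $p(x,y)>0$. Since $p_X(x)=\sum_y p(x,y)$, every left vertex has at least one neighbour, and likewise on the right, so $G$ has no isolated vertices and every connected component meets both sides. Write $C_1,\dots,C_m$ for the connected components and set $\mc X_i:=C_i\cap\mc X$, $\mc Y_i:=C_i\cap\mc Y$; call this partitioning $\pi^\ast$.

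First I would check that $\pi^\ast$ is itself a common partitioning. Property (i) is immediate since distinct components are disjoint, and property (iii) holds because every element of $\mc X_i$ or $\mc Y_i$ is by construction a vertex of $G$, hence of positive marginal probability. For (ii): each $\Pr[X\in\mc X_j]$ and $\Pr[Y\in\mc Y_j]$ is strictly positive (a component meets both sides and all its vertices carry positive marginal mass), and conditioned on $Y\in\mc Y_j$ the variable $X$ takes, with probability one, a value adjacent in $G$ to a vertex of $C_j$ and hence lying in $\mc X_j$; thus $p(\mc X_i\mid\mc Y_j)=\delta_{ij}$, and symmetrically $p(\mc Y_i\mid\mc X_j)=\delta_{ij}$. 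Since the blocks on each side are disjoint nonempty subsets, the length of any common partitioning with nonempty blocks is at most $\min(|\mc X|,|\mc Y|)$, so a maximal one exists.

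The crux is the converse containment: I claim every common partitioning $\pi=(\mc X_i,\mc Y_i)_{i=1}^t$ with nonempty blocks (we may assume this, since empty or zero-probability blocks only decrease length and may be discarded) has each block vertex set $B_i:=\mc X_i\cup\mc Y_i$ equal to a union of connected components of $G$. The key step is to upgrade the averaged condition (ii) to a pointwise one: from $\Pr[Y\in\mc Y_i\mid X\in\mc X_i]=1$ and the fact that every $x\in\mc X_i$ has $p_X(x)>0$ (by (iii)), an averaging argument forces $\Pr[Y\in\mc Y_i\mid X=x]=1$ for each such $x$; equivalently, every $G$-neighbour of a vertex of $\mc X_i$ already lies in $\mc Y_i$, and symmetrically. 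Hence $B_i$ is closed under adjacency in $G$, so it is a union of connected components. This is the step I expect to require the most care, mostly in disposing of degenerate blocks cleanly and making the averaging argument airtight.

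Finally I would conclude uniqueness. Since the blocks of any common partitioning carve (a subset of) the vertex set of $G$ into unions of components, its length is at most $m$, the number of components, with equality only if each block is a single component — that is, only if the partitioning equals $\pi^\ast$. Combined with the fact that $\pi^\ast$ is a common partitioning of length $m$, this shows $\pi^\ast$ is the unique maximal common partitioning. (Equivalently, were a maximal $\pi$ to contain a block that is a union of two or more components, splitting it along its components would again give a common partitioning — verified just as for $\pi^\ast$, using that $G$ has no edges between distinct components — of strictly greater length, a contradiction.)
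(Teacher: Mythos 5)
Your proof is correct and takes the standard route: identifying the maximal common partitioning with the connected components of the bipartite support graph (edges where $p(x,y)>0$) is precisely the construction underlying the G\'acs--K\"orner common information that the paper invokes, and the two key steps --- the averaging argument upgrading $p(\mc{Y}_i\mid\mc{X}_i)=1$ to the pointwise statement that every neighbour of $\mc{X}_i$ lies in $\mc{Y}_i$, and the counting argument that $t\le m$ with equality only for the component partitioning --- are both sound. The degenerate-block issue is also handled acceptably, since condition (ii) already forces each block to have positive probability (otherwise the conditional $p(\,\cdot\mid\mc{Y}_j)$ is undefined), so nothing is lost by restricting to nonempty blocks, and uniqueness holds in the intended sense of uniqueness up to relabeling of blocks.
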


With Proposition \ref{Prop:Partition-Unique}, we can speak unambiguously of \textit{the} maximal common partitioning of two random variables $X$ and $Y$.  We let $J_{XY}$ denote a \textbf{maximal common function} associated with the maximal common partitioning of $XY$; i.e. $J_{XY}(x,y)$ is the number $i$ for which $(x,y)\in\mc{X}_i\times\mc{Y}_i$ in the partitioning.  Since $J_{XY}$ is determined entirely from either $X$ or $Y$, with a slight abuse of notation we will sometimes write $J_{XY}(x)$ (resp. $J_{XY}(y)$) if we wish $J_{XY}$ to denote a function of $x$ (resp. $y$).  Note that $J_{XY}$ is unique up to a relabeling of the blocks $\mc{X}_i\times\mc{Y}_i$.  The entropic quantity $H(J_{XY})$ was identified by G\'{a}cs and K\"{o}rner as the \textbf{common information} of $X$ and $Y$ \cite{Gacs-1973a}.

We now turn to the conditional form of common functions.  For distribution $p_{XYZ}$, the variable $J_{XY|Z}$ is called a \textbf{maximal conditional common function} if $J_{XY|Z=z}$ is a maximal common function of the conditional distribution $p_{XY|Z=z}$.  Explicitly, we have $J_{XY|Z}(x,y,z)=j$ if $(x,y)\in\mc{X}_j\times\mc{Y}_j$ for some fixed block labeling of the maximal common   partitioning of $p_{XY|Z=z}$.  Again, note that $J_{XY|Z}$ is unique for the distribution $p_{XYZ}$ up to the choice of block labeling for each conditional distribution $p_{XY|Z=z}$.  For the case when Alice and Bob are perfectly correlated (or when there is just one party), we also define $J_{X|Z}:=J_{XX|Z}$.

\end{document}